\title{How to Learn Real-Life Approval Elections?}
\title{Learning Real-Life Approval Elections\thanks{For our code and data, see \url{https://github.com/Project-PRAGMA/Learning-IAMs-AAMAS-2025}}}
\author{
\begin{tabular}{cc}
\begin{tabular}{c}
Piotr Faliszewski\\
AGH University of Kraków\\
Poland
\end{tabular}
&
\begin{tabular}{c}
Łukasz Janeczko\\
AGH University of Kraków\\
Poland
\end{tabular}
  \\[2em]
\begin{tabular}{c}
Andrzej Kaczmarczyk\\
Department of Computer Science\\
The University of Chicago\\
USA
\end{tabular}
&
\begin{tabular}{c}
Marcin Kurdziel\\
AGH University of Kraków\\
Poland
\end{tabular}
\\[2em]
\begin{tabular}{c}
Grzegorz Pierczyński\\
AGH University of Kraków\\
Poland
\end{tabular}
&
\begin{tabular}{c}
Stanisław Szufa\\
CNRS, LAMSADE\\
Université Paris Dauphine -- PSL\\
France
\end{tabular}
\end{tabular}
}
\newtheorem{theorem}{Theorem}[section]
\newtheorem{claim}[theorem]{Claim}
\newtheorem{proposition}[theorem]{Proposition}
\newtheorem{corollary}[theorem]{Corollary}
\newtheorem{remark}{Remark}[section]
\newtheorem{definition}{Definition}[section]
\newcommand{\mknote}[1]{\todo[color=cyan!10,inline]{MK: #1}}
\newcommand{\aknote}[1]{\todo[color=red!15,inline]{AK: #1}}
\pgfplotsset{compat=1.17,
	legend image code/.code={
		\draw[mark repeat=2,mark phase=2]
		plot coordinates {
			(0cm,0cm)
			(0.15cm,0cm)        %
			(0.3cm,0cm)         %
		};%
}}
\newcommand{\prob}{\mathbb{P}}
\newcommand{\calA}{\mathcal{A}}
\newcommand{\calD}{\mathcal{D}}
\newcommand{\cond}{\!\mid\!}
\newcommand{\eval}{{\text{eval}}}
\newcommand{\learn}{{\text{learn}}}
\newcommand{\sample}{{\text{sample}}}
\newcommand{\try}{{\text{try}}}
\newcommand{\pr}{{{\mathrm{prob}}}}
\newcommand{\app}{{\mathrm{app}}}
\newcommand{\ham}{{{\mathrm{ham}}}}
\newcommand{\vah}{{{\mathrm{va}\text{-}\mathrm{ham}}}}
\newcommand{\Ham}{{{\mathrm{Ham}}}}
\begin{document}

\maketitle

\begin{abstract}
  We study the independent approval model (IAM) for approval
  elections, where each candidate has its own approval probability and
  is approved independently of the other ones. This model generalizes,
  e.g., the impartial culture, the Hamming noise model, and the
  resampling model. We propose algorithms for learning IAMs and their
  mixtures from data, using either maximum likelihood estimation or
  Bayesian learning. We then apply these algorithms
  to a large set of elections from the Pabulib database. In
  particular, we find that single-component models are rarely
  sufficient to capture the complexity of real-life data, whereas
  their mixtures perform~well.
\end{abstract}

\section{Introduction}

The goal of this paper is to design algorithms that take an approval
election as input and produce
simple probabilistic models for generating similar elections (models
of generating random elections are often called statistical cultures).
We form such algorithms using maximum likelihood estimation (MLE) and
Bayesian learning approaches, and
evaluate
them
on the data from the Pabulib collection of real-life participatory
budgeting elections~\citep{fal-fli-pet-pie-sko-sto-szu-tal:c:pabulib}.
Consequently, we also get an insight into the nature of Pabulib data.\footnote{We disregard the costs of the
  projects (candidates) present in participatory budgeting.}

More formally, an approval election consists of a set of candidates
and a collection of voters.
Each voter indicates which candidates he or she finds appealing, i.e.,
which ones he or she approves, and which ones he or she does
not. %
One particularly natural model of generating such elections is to
provide for each candidate $c$ a probability $p_c$, so that each voter
approves $c$ with this probability, independently from the other
candidates and voters.
We refer to this model as the independent
approval model (IAM).  For a positive integer $t$, by a $t$-parameter
IAM we mean an IAM where each candidate has one of at most $t$
different approval probabilities, and we often write \emph{full IAM}
when no such restriction applies.
While full IAM is not widely used in computational
social choice---indeed, it does not appear in the recent overview of
\citet{boe-fal-jan-kac-lis-pie-rey-sto-szu-was:c:guide}---its
restricted variants are quite popular.  For example, $1$-IAM---where
each candidate is approved independently with some probability
$p$---is simply the $p$-Impartial Culture model ($p$-IC), one of the
most popular statistical cultures for approval
elections~\citep{boe-fal-jan-kac-lis-pie-rey-sto-szu-was:c:guide}, and
$2$-IAM is equivalent to the resampling model of
\citet{szu-fal-jan-lac-sli-sor-tal:c:sampling-approval-elections}.
We also note that the Hamming noise model---analyzed, e.g., by
\citet{car-kak-kar-kri:j:noise-approval}, is a restricted variant of
$2$-IAM.

We provide algorithms that given an approval election and a
number~$t$, find a $t$-parameter IAM that maximizes the probability of
generating this election. These algorithms are simple for impartial
culture, Hamming noise model, and full IAM, while resampling and
general $t$-parameter IAMs require more effort. We also show how the
classic expectation-maximization (EM) and Bayesian learning algorithms
can be used to learn mixtures of $t$-parameter IAMs, albeit with
limited guarantees. In this case, we focus on the Hamming noise model,
resampling, and full IAM.

There are two main reasons why such learning algorithms are
useful. The first one is that using them we can get a strong insight
into the nature of the elections that we learn. In our case, we
consider all 271 approval elections from
Pabulib~\citep{fal-fli-pet-pie-sko-sto-szu-tal:c:pabulib}
with up to tens of thousands of voters and between a few to $200$ candidates.
For each of these elections we learn each of
our models. We find that while single IAMs are sufficient for some of
the instances, in most cases it is necessary to consider mixtures of
at least a few IAM components. In addition, we find that some elections are
 inherently difficult to learn, irrespectively how strong would
our models be.

The second reason why our learning algorithms---especially those for
mixture models---are useful, is that they provide models which
generate fairly realistic synthetic data. In this sense, such models
are more realistic than basic, stylized models often used in the
literature (see the overview
of~\citet{boe-fal-jan-kac-lis-pie-rey-sto-szu-was:c:guide}), but still
offer a strong level of control over the generated data. For example,
we can generate as many votes as we like.

\paragraph{Related Work}
So far, learning approval elections did not receive much attention in
computational social
choice~\citep{boe-fal-jan-kac-lis-pie-rey-sto-szu-was:c:guide}.
However, we mention a paper that analyzes the number of approval votes
that we need to sample to learn an underlying ground
truth~\citep{car-mic:c:learning-mallows}.
Further, \citet{rol-aub-gan-leo:j:learning-evaluation-voting} consider learning
profiles where each voter assigns a score to each candidate that
either comes from a continuous domain or from a discrete one. For the
discrete case, their setting generalizes ours, but they consider quite
different distributions and learning approaches.
On the other hand, learning
models of ordinal elections, where each voter ranks the candidates, is
well-represented. For example, there are %
algorithms for
learning the classic Mallows
model~\citep{lu-bou:j:sampling-mallows,bet-bre-nie:j:kemeny,boe-bre-elk-fal-szu:c:map-pref-learning,liu-moi:c:mallows-mixtures},
or the Plackett--Luce model, which is similar in spirit to our
IAMs~\citep{zha-pie-xia:c:plackett-luce-icml16,liu-zha-lia-lu-xia:plackett-luce-aaai19,zha-xia:c:plackett-luce-nips19,ngu-zha:c:plackett-luce-aaai23}.
There is also literature on learning voting rules, but it is quite
distant from our work and we only mention a single paper on this topic~\citep{car-feh:c:learning-approval-rules}.

\section{Preliminaries}\label{sec:prelim}

For a positive integer $t$, by $[t]$ we mean the set
$\{1, \ldots, t\}$.  Given two numbers $a$ and $b$, we write $[a;b]$
to denote a closed interval between $a$ and $b$. For some
probabilistic event $X$, we write $\prob(X)$ to denote the probability
that it occurs.  Given a random variable $X$ and some probability
distribution $D$, we write $X \sim D$ to indicate that~$X$ is
distributed according to~$D$.  In particular, we use the following
standard distributions:
\begin{enumerate}
\item $U(a,b)$ is the uniform distribution over interval $[a;b]$.
\item Under $\mathit{Bernoulli}(p)$ we draw $1$ with probability $p$ and $0$ with probability
  $1-p$.
\item Under the categorical distribution
  $\mathit{Cat}(p_1,\ldots, p_k)$, for each $i \in [k]$ the
  probability of drawing $i$ is $p_i$ (hence we require all $p_i$
  values to be nonnegative and to sum up to $1$).
\end{enumerate}

\paragraph{Elections}
An \emph{(approval) election} is a pair $E=(C, V)$, where
$C=\{c_1, c_2, \ldots, c_m\}$ is a set of \emph{candidates} and
$V = (v_1, \ldots, v_n)$ is a collection of voters.  Each voter $v_i$
has a \emph{vote} $A(v_i) \subseteq C$ (also called an \emph{approval
  ballot}), i.e., a set of candidates that this voter approves.  By
$v_i(c_j)$ we mean the 1/0 value indicating whether $c_j$ is included
in $A(v_i)$ or not.
For each candidate $c \in C$, we write $V(c)$ to denote the set of
voters that approve $c$. The value $|V(c)|$ is known as the approval
score of $c$; $\nicefrac{|V(c)|}{n}$ is the probability that a
random voter approves $c$.
Given two votes, $X, Y \subseteq C$, their Hamming distance is
$\ham(X,Y) = |X \setminus Y| + |Y \setminus X|$, i.e., it is the
number of candidates approved in exactly one of them.

\paragraph{Probabilistic Models of Elections}
For a set of candidates $C$, we write $\calD(C)$ to denote the family
of probability distributions over the subsets of $C$, i.e., over the
votes with candidates from $C$.  For a distribution $D \in \calD(C)$
and a vote $X \subseteq C$, $\prob(X \cond D)$ is the probability of
generating vote $X$ under $D$.
For an election $E = (C,V)$, where $V = (v_1, \ldots, v_n)$,
$\prob(E \cond D)$ is the probability of generating $E$, provided that
each of its votes is drawn from $D$ independently:
\begin{equation}
  \textstyle \prob(E \cond D) = \prod_{i \in [n]} \prob(A(v_i) \cond D).\label{eq:prob}
\end{equation}
$\prob(E \cond D)$ is %
called the \emph{likelihood} of generating $E$ under $D$.  We will
also be interested in $\ln( \prob(E \cond D) )$, i.e., the
log-likelihood of generating $E$.

\begin{remark}
  We view the voters as non-anonymous. To see what this entails,
  consider elections $E' = (C,V')$ and $E'' = (C,V'')$, with
  $C = \{a,b\}$, $V' = (v'_1, v'_2)$, and $V'' = (v''_1, v''_2)$,
  where:
  \begin{align*}
    A(v'_1) = \{a,b\},\ A(v'_2) = \{b\}, && \text{and}&&
    A(v''_1) = \{b\},\ A(v''_2) = \{a,b\}.                      
  \end{align*}
  In our model, these two elections are distinct, but they would be
  equal if one viewed the voters as anonymous (it would only matter
  how many particular votes were cast, and not who cast them).
  The choice of the voter model does not affect our results: The problems of maximizing
  the probability of generating a given election under both models are
  equivalent (the respective probabilities only differ by a product of
  some binomial coefficents that only depend on the election's votes).

\end{remark}

For a candidate set~$C = \{c_1, \ldots, c_m\}$ and a number of
voters~$n$, a \emph{statistical culture} is a probability distribution
over elections with this candidate set and $n$ voters. By a small
abuse of notation, we will also refer to the distributions from
$\calD(C)$ as statistical cultures. Indeed, given $D \in \calD(C)$ we
can always sample an election by drawing the votes from $D$
independently.
The
following cultures from $\calD(C)$ are particularly relevant:
\begin{description}
\item[$\boldsymbol{p}$-Impartial Culture ($\boldsymbol{p}$-IC).] Under
  $p$-IC, for each voter $v$ and candidate $c$ we have that
  $v(c) \sim \mathit{Bernoulli}(p)$, i.e., each voter approves each
  candidate with probability $p$.

\item[$\boldsymbol{\phi}$-Hamming.] This distribution is parameterized
  by a central vote $U \subseteq C$ and a parameter $\phi \in [0;1]$.
  The probability of generating voter $v$'s vote is proportional to
  $\phi^{\ham(U,A(v))}$. $\phi$-Hamming is sometimes referred to as the
  $\phi$-noise
  model~\citep{szu-fal-jan-lac-sli-sor-tal:c:sampling-approval-elections}.
\item[$\boldsymbol{(p,\phi)}$-Resampling.] The resampling model is
  parameterized by a central vote $U \subseteq C$, resampling
  probability $\phi \in [0;1]$, and approval probability
  $p \in [0;1]$.  To generate a voter's $v$, vote $A(v) \subseteq C$,
  we do as follows: First, we let $A(v)$ be equal to $U$. Then,
  independently for each $c \in C$, with probability $\phi$ we replace
  value $v(c)$ with one sampled from $\mathit{Bernoulli}(p)$.

\end{description}
Impartial culture and the Hamming model are part of the folk
knowledge, although the Hamming model was recently studied by
\citet{car-kak-kar-kri:j:noise-approval} and
\citet{szu-fal-jan-lac-sli-sor-tal:c:sampling-approval-elections}. The
resampling model is due to
\citet{szu-fal-jan-lac-sli-sor-tal:c:sampling-approval-elections}.
The Hamming model is analogous to the classic Mallows model from the
world of ordinal elections~\citep{mal:j:mallows}, albeit
\citet{szu-fal-jan-lac-sli-sor-tal:c:sampling-approval-elections}
advocate using the resampling model instead.

\paragraph{Mixture Models for Elections}
Let $C$ be some set of candidates.  Given a family of $K$
distributions $D_1, \ldots, D_K \in \calD(C)$ and probabilities
$p_1, \ldots, p_K$ (where $\sum_{k=1}^K p_k = 1$), we can form
the following  \emph{mixture model}:
(1) We draw a number $k \sim \mathit{Cat}(p_1, \ldots, p_K)$ and, then,
(2) we draw a vote $X \sim D_k$.
We call $D_1, \ldots, D_K$ the components of this model, and $K$ is their number.

\section{Independent Approval Model}\label{sec:iam}

In this section we present the independent approval model and argue
that it generalizes all the statistical cultures from
\Cref{sec:prelim}.  We stress that it was already studied, e.g., by
\citet{lac-mal:j:approval-shortlisting} and
\citet{xia:t:linear-multiwinner}. Let $C = \{c_1,\ldots, c_m\}$ be the
candidate set:
\begin{description}

\item[$\boldsymbol{(p}_{\boldsymbol{1}}\boldsymbol{,\ldots, }
  \boldsymbol{p}_{\boldsymbol{t}}\boldsymbol{)}$-Independent Approval
  Model.]  In this model, abbreviated as $(p_1,\ldots, p_t)$-IAM, the
  candidate set is partitioned into $t$ disjoint groups,
  $C_1, \ldots, C_t$, and for each group $C_j$, each candidate
  $c_i \in C_j$ is approved independently, with probability $p_j$. We
  use the name \emph{$t$-parameter IAM} %
  when we disregard specific probability values.
\end{description}

The $(p_1, \ldots, p_m)$-IAM, where every candidate has his or her
individual approval probability, is a particularly natural special
case of the independent approval model, which we refer to as
\emph{full IAM}. Further, $p$-IC and the $(p,\phi)$-resampling models
also are special cases of IAM. For the former, simply take a single
candidate group with approval probability $p$. For the latter, note
that $(p,\phi)$-resampling with central vote $U$ is equivalent to the
$(p_1,p_2)$-IAM with:
\begin{align*}
  p_1 = (1-\phi) + \phi \cdot p, &\quad \text{ and }\quad p_2 = \phi \cdot p,
\end{align*}
where $C_1 = U$ and $C_2 = C \setminus U$. 
In the other direction,
$(p_1,p_2)$-IAM with candidate groups $C_1$ and $C_2$, where
$p_1 \geq p_2$, is equivalent to $(p,\phi)$-resampling with
$\phi = 1 - (p_1-p_2)$,
$p = \nicefrac{p_2}{1-(p_1-p_2)}$,
and central vote $U = C_1$ (note that as $p_1 \geq p_2$, we have that
$p$ and $\phi$ are guaranteed to be between $0$ and $1$).
Consequently, resampling and $2$-parameter IAM are equivalent.

The $\phi$-Hamming model is a special case of $2$-parameter
IAM. Putting it in our language,
\citet{car-kak-kar-kri:j:noise-approval} have shown that for an
approval probability $p \in [0.5;1]$, $(p,1-p)$-IAM---with candidate
groups $C_1$ and $C_2$---is equivalent to $\phi$-Hamming with
$\phi = \frac{1-p}{p}$ and central vote $U = C_1$.  Similarly,
$\phi$-Hamming with central vote $U$ is equivalent to
$(p_1,p_2)$-resampling with candidate groups $C_1 = U$ and
$C_2 = C \setminus U$, $p_1 = \frac{1}{1+\phi}$ and
$p_2 = 1-p_1 = \frac{\phi}{1+\phi}$.

Altogether, we have the following hierarchy of expressivity of the IAM
models (we view our models as sets of distributions,
for all possible choices of parameters; e.g., by
$p$-IC we mean all the impartial culture distributions for all
approval probabilities $p$):
\begin{align*}
  p\hbox{-}\text{IC} \subset \phi\hbox{-}\text{Hamming} &\subset
                       (p,\phi)\hbox{-}\text{Resampling} = (p_1,p_2)\hbox{-}\text{IAM} \\
  &\subset (p_1,p_2,p_3)\hbox{-}\text{IAM} \subset \cdots \subset (p_1,\ldots, p_m)\hbox{-}\text{IAM}.
\end{align*}

\begin{remark}
  Later on %
  we consider mixture models based on IAM variants. For
  example, by %
  $2$-full-IAM we mean %
  a mixture model
  with two full-IAM components. %
  This
  should not be confused with $2$-parameter IAM, which %
  is a   single-component resampling model.
\end{remark}

\section{Learning Algorithms}

Let us now focus on the following task: We are given an election
$E = (C,V)$ with candidate set $C = \{c_1, \ldots, c_m\}$, and voter
collection $V = (v_1, \ldots, v_n)$. We also have a family of
distributions from $\calD(C)$. Our goal is to find a distribution from
this family that maximizes the probability of generating election
$E$. We will first solve this problem for each of the special cases of
IAM from the previous two sections, and then we will consider IAM
mixture models.

\subsection{Learning a Single IAM Model}\label{sec:learn-single-iam}

Let $E = (C,V)$ be an election, as described above. For each set of
candidates $B \subseteq C$, let $\app_E(B) = \sum_{c \in B}|V(c)|$ be
the total number of approvals that members of $B$ receive, and let
$\pr_E(B) = \frac{\app(B)}{n|B|}$ be the probability that a random
voter approves a random candidate from~$B$. By $E(B)$, we mean
election $E$ restricted to the candidate set~$B$.  Consider a
partition of $C$ into sets $X$ and $Y$, and let $D_X \in \calD(X)$ and
$D_Y \in \calD(Y)$ be two IAMs with $t_1$ and $t_2$ parameters,
respectively. Further, let $D_{XY} \in \calD(C)$ be the
$(t_1+t_2)$-parameter IAM that generates approvals for candidates from
$X$ according to $D_X$ and for those from $Y$ according to $D_Y$. We
have:
\begin{equation}
  \prob(E \cond D_{XY}) = \prob(E(X) \cond D_X) \cdot \prob(E(Y) \cond
  D_Y).
  \label{eq:iam-decompose}
\end{equation}
For each $t \in [|C|]$ and each partition of $C$ into
$C_1, \ldots, C_t$, we write IAM$(C_1, \ldots, C_t)$ to refer to the
$t$-parameter IAM that uses this partition and for each $i \in [t]$,
the probability of approving a candidate from $C_i$ is
$p_i = \pr_E(C_i)$. As per \Cref{eq:iam-decompose}, we have:
\begin{equation*}
  \textstyle
  \prob(E \cond  \text{IAM}(C_1, \ldots, C_t)) = %
                 \textstyle \prod_{i \in [t]} \prob(E(C_i) \cond \pr_{E(C_i)}(C_i)\hbox{-}\text{IC}).
\end{equation*}
Intuitively, if we want a $t$-parameter IAM that maximizes the
probability of generating a given election, it suffices to use
IAM$(C_1, \ldots, C_t)$ for an appropriate partition of the
candidate set.
Next we show this fact formally and argue how to find optimal
partitions (for the $\phi$-Hamming model we use a different approach).

\subsubsection{Impartial Culture and the Full IAM Model}
For impartial culture, finding the parameter that maximizes the
probability of generating a given election $E$ is easy: It suffices to
use $p$-IC with $p$ equal to the proportion of approvals in the
election
(this is a standard fact from
statistics, often expressed in different contexts).

\begin{proposition}\label{prop:learn:p-ic}
  Let $E = (C,V)$ be an approval election.
  Probability $\prob(E \cond q\hbox{-}\text{IC})$
  is maximized for $p = \pr_E(C)$.
\end{proposition}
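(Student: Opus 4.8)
The plan is to treat $\prob(E \mid q\hbox{-}\text{IC})$ as a function of $q$ and maximize it directly using elementary calculus. First I would write out the likelihood explicitly. Under $q$-IC, each voter approves each candidate independently with probability $q$, so $\prob(A(v_i) \mid q\hbox{-}\text{IC}) = \prod_{c \in C} q^{v_i(c)} (1-q)^{1 - v_i(c)} = q^{|A(v_i)|} (1-q)^{m - |A(v_i)|}$. Multiplying over all $n$ voters via \Cref{eq:prob}, and letting $S = \sum_{i \in [n]} |A(v_i)| = \app_E(C)$ be the total number of approvals, gives
\begin{equation*}
  \prob(E \mid q\hbox{-}\text{IC}) = q^{S} (1-q)^{nm - S}.
\end{equation*}

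Next I would pass to the log-likelihood $\ell(q) = S \ln q + (nm - S) \ln(1-q)$, which is legitimate since $\ln$ is strictly increasing, so the maximizer is unchanged. Differentiating, $\ell'(q) = \tfrac{S}{q} - \tfrac{nm - S}{1-q}$, and setting $\ell'(q) = 0$ yields $S(1-q) = (nm-S)q$, i.e.\ $q = \tfrac{S}{nm} = \tfrac{\app_E(C)}{n|C|} = \pr_E(C)$, exactly the claimed value. To confirm this critical point is a maximum rather than a minimum or saddle, I would note $\ell''(q) = -\tfrac{S}{q^2} - \tfrac{nm-S}{(1-q)^2} < 0$ throughout $(0,1)$, so $\ell$ is strictly concave and the critical point is the unique global maximum on $(0,1)$.

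The only genuine subtlety — and the one I expect to be the main obstacle — is the handling of degenerate boundary cases: if $S = 0$ (no voter approves anyone) the likelihood $q^{S}(1-q)^{nm}$ is maximized at $q = 0 = \pr_E(C)$, and symmetrically if $S = nm$ it is maximized at $q = 1 = \pr_E(C)$; in both cases the derivative argument breaks down because a factor is constant or $\ln 0$ appears, so these must be checked separately (and the formula $\pr_E(C)$ still gives the right answer). For $0 < S < nm$ the interior analysis above applies and $\ell \to -\infty$ at both endpoints, confirming the maximizer lies in the open interval. Assembling these pieces gives the proposition in full generality.
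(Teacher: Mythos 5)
Your proposal is correct and takes essentially the same route as the paper: both write the likelihood explicitly as $q^{S}(1-q)^{nm-S}$ and locate the critical point $q = \pr_E(C)$ by elementary calculus, with the degenerate cases $S=0$ and $S=nm$ set aside (the paper relegates them to a footnote). The only cosmetic difference is that you differentiate the log-likelihood and verify strict concavity, whereas the paper differentiates $h(q)=q^p(1-q)^{1-p}$ directly; your concavity check is a small added rigor, not a different argument.
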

\begin{proof}
  Consider election $E = (C,V)$ and let $p = \pr_E(C)$.  Let $m$ be
  the number of candidates and let $n$ be the number of voters.  For a
  number $q \in [0;1]$, the probability of generating $E$ under the
  $q$-IC model is as follows:
  \begin{align*}
    f(q) =  q^{\app_E(C)}(1-q)^{nm-\app_E(C)} 
         =  q^{pnm}(1-q)^{(1-p)nm} 
         =  \left(q^p(1-q)^{1-p}\right)^{nm}.
  \end{align*}
  In other words, there are $\app_E(C) = pnm$ approvals
  in the election and each of them is correctly generated with
  probability $q$. Similarly, each of the
  $nm - \app_E(C) = (1-p)nm$ nonapprovals is generated
  correctly with probability $1-q$. As, from our point of view, $m$
  and $n$ are constants, $f(q)$ is maximized for the same argument as
  $h(q) = q^p(1-q)^{1-p}$. The derivative of $h(q)$ is:
  \begin{align*}
    h'(q) = pq^{p-1}(1-q)^{1-p} - q^p(1-p)(1-q)^{-p} 
          =q^p(1-q)^{-p} \cdot \left( pq^{-1}(1-q) - (1-p) \right),
  \end{align*}
  and it assumes value $0$ when:\footnote{We disregard the cases where
    either all the voters approve all the candidates or neither of the
    voters approves any of the candidates, for which, respectively,
    $q = 1$ and $q = 0$ are immediately seen to be the values
    maximizing $\prob(E \cond q\hbox{-}\text{IC})$.}
  \begin{align*}
    pq^{-1}(1-q) = (1-p).
  \end{align*}
  This holds when $p(1-q) = q(1-p)$, which is equivalent to $q = p$.
  Hence $\prob(E \cond q\hbox{-}\text{IC})$ is maximized for $q = p = \pr_E(C)$.
\end{proof}

Consequently, to learn the parameters of an IAM for a given election,
it suffices to find a partition of the candidates.

\begin{proposition}\label{pro:iam-decompose}
  Let $E = (C,V)$ be an election. %
  For each
  $t \in [|C|]$ there is a partition of $C$ into $C_1, \ldots, C_t$
  such that IAM$(C_1, \ldots, C_t)$ maximizes the probability of
  generating $E$ among $t$-parameter IAMs.
\end{proposition}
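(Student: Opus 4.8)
The plan is to show that for any $t$-parameter IAM $D$, there is a partition $C_1, \ldots, C_t$ of $C$ with $\prob(E \mid \mathrm{IAM}(C_1, \ldots, C_t)) \geq \prob(E \mid D)$. Fix an optimal $t$-parameter IAM $D$. By definition it comes with some partition $C_1, \ldots, C_t$ of $C$ and approval probabilities $q_1, \ldots, q_t$, so that candidates in $C_i$ are each approved independently with probability $q_i$. First I would apply \Cref{eq:iam-decompose} inductively to decompose the likelihood along this very partition: $\prob(E \mid D) = \prod_{i \in [t]} \prob(E(C_i) \mid q_i\hbox{-}\text{IC})$, where I treat the restriction $D$ to the block $C_i$ as the $q_i$-IC model on candidate set $C_i$.

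Next, for each block $C_i$ separately, I would invoke \Cref{prop:learn:p-ic}: the factor $\prob(E(C_i) \mid q_i\hbox{-}\text{IC})$ is maximized over the choice of $q_i$ by setting $q_i = \pr_{E(C_i)}(C_i)$. Replacing each $q_i$ by this optimal value can only increase (weakly) the corresponding factor, and since all factors are nonnegative, the product is weakly increased as well. But the resulting model — using the same partition $C_1, \ldots, C_t$ and approval probabilities $\pr_{E(C_i)}(C_i)$ — is exactly $\mathrm{IAM}(C_1, \ldots, C_t)$ by the definition given just before the statement. Hence $\prob(E \mid \mathrm{IAM}(C_1, \ldots, C_t)) \geq \prob(E \mid D)$, and since $D$ was chosen optimal among $t$-parameter IAMs, $\mathrm{IAM}(C_1, \ldots, C_t)$ attains the maximum.

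One detail I would make precise is the inductive use of \Cref{eq:iam-decompose}, which as stated handles a two-way split $C = X \cup Y$; iterating it to a $t$-way partition is routine (peel off one block at a time), but I would state it cleanly, perhaps as: for any partition into $C_1, \ldots, C_t$ and any IAMs $D_i \in \calD(C_i)$, the product IAM satisfies $\prob(E \mid D) = \prod_i \prob(E(C_i) \mid D_i)$. I would also note the boundary cases where $\pr_{E(C_i)}(C_i) \in \{0,1\}$, already flagged in the footnote to \Cref{prop:learn:p-ic}, cause no trouble since there the optimal $q_i$ is still well-defined.

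I do not expect a genuine obstacle here: the statement is essentially a packaging of \Cref{prop:learn:p-ic} together with the multiplicative decomposition \Cref{eq:iam-decompose}. The only mildly delicate point is conceptual rather than technical — being careful that "restrict $D$ to $C_i$" really does yield a $q_i$-IC instance on the smaller candidate set, so that \Cref{prop:learn:p-ic} applies verbatim block by block. Everything else is a one-line monotonicity argument on a product of nonnegative terms.
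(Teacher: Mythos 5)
Your proposal is correct and follows essentially the same route as the paper: fix an optimal $t$-parameter IAM, decompose its likelihood blockwise via \Cref{eq:iam-decompose}, and apply \Cref{prop:learn:p-ic} to each block to conclude the optimal probabilities must be the empirical approval fractions, i.e., the model is IAM$(C_1,\ldots,C_t)$. Your phrasing (replace each $q_i$ by the optimizer, which weakly increases the product) is if anything slightly more careful than the paper's, which directly asserts $p_i = \pr_{E(C_i)}(C_i)$, but the substance is identical.
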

\begin{proof}
  Let us fix a value of $t \in [|C|]$. Consider a $t$-parameter IAM
  model that maximizes the probability of generating $E$. By
  definition, the model is parameterized by a partition of $C$ into
  subsets $C_1, \ldots, C_t$, and probabilities $p_1, \ldots, p_t$,
  such that for each $i \in [t]$, each candidate from $C_i$ is
  approved with probability $p_i$. By \Cref{eq:iam-decompose}, each
  $p_i$ maximizes the probability of generating $E(C_i)$ under the
  impartial culture model. Thus, by \Cref{prop:learn:p-ic}, for each
  $i \in [t]$ we have $p_i = \pr_{E(C_i)}(C_i)$. So $D$ is
  IAM$(C_1, \ldots, C_t)$.
\end{proof}

Consequently, the full IAM that maximizes the probability of
generating a given election uses parameters where each candidate is
approved with probability equal to the fraction of its approvals.

\begin{corollary}\label{cor:learn:iam}
  Let $E = (C,V)$ be an election, where %
  $C = \{c_1, \ldots, c_m\}$ and $n$ voters.  Probability
  $\prob(E \cond (p_1,\ldots, p_m)\hbox{-}\text{IAM})$ is maximized if
  for each $i \in [m]$ we have $p_i = \nicefrac{|V(c)|}{n}$.
\end{corollary}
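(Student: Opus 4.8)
The plan is to derive \Cref{cor:learn:iam} directly from \Cref{pro:iam-decompose} by specializing to the case $t = m$, since the full IAM is exactly the $m$-parameter IAM. First I would observe that when $t = |C| = m$, any partition of $C$ into $t$ nonempty parts must place exactly one candidate in each part, so the partition is forced (up to relabeling) to be $C_i = \{c_i\}$ for $i \in [m]$. Thus the only $m$-parameter IAM structures are the full IAMs, and \Cref{pro:iam-decompose} tells us that among all $m$-parameter IAMs, the one of the form $\text{IAM}(\{c_1\}, \ldots, \{c_m\})$ maximizes $\prob(E \mid \cdot)$.

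Next I would unpack what $\text{IAM}(\{c_1\}, \ldots, \{c_m\})$ means according to the definition given just before \Cref{eq:iam-decompose}: for each singleton part $C_i = \{c_i\}$, the approval probability is set to $p_i = \pr_E(\{c_i\})$. By the definition of $\pr_E$, we have $\pr_E(\{c_i\}) = \frac{\app_E(\{c_i\})}{n \cdot |\{c_i\}|} = \frac{|V(c_i)|}{n}$, since $\app_E(\{c_i\}) = |V(c_i)|$ is just the approval score of $c_i$. So the maximizing full IAM uses precisely $p_i = \nicefrac{|V(c_i)|}{n}$ for each $i$, which is exactly the claim (modulo the harmless notational slip of writing $|V(c)|$ for $|V(c_i)|$ in the statement).

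The argument is essentially a one-line corollary, so there is no real obstacle; the only thing worth stating carefully is why the partition is forced when $t = m$ and why singleton-part IAM amounts to full IAM. I would also briefly note the edge case that candidates with approval score $0$ or $n$ are handled correctly: for a singleton part these correspond to $p_i = 0$ or $p_i = 1$, which (as observed in the footnote of \Cref{prop:learn:p-ic}) are indeed the likelihood-maximizing values, so \Cref{prop:learn:p-ic} — and hence \Cref{pro:iam-decompose} — still applies. Putting these pieces together yields the corollary immediately.
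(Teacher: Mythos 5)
Your proof is correct and matches the paper's (implicit) argument exactly: the paper derives this corollary as an immediate consequence of \Cref{pro:iam-decompose} specialized to $t=m$, where the partition is forced into singletons and $\pr_E(\{c_i\}) = \nicefrac{|V(c_i)|}{n}$. The edge-case remark about scores $0$ and $n$ is a nice extra touch but not needed.
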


\subsubsection{Hamming Model}
For each $\phi \in [0;1]$ and each vote $U$, we let $\phi$-$\Ham(U)$
denote the $\phi$-Hamming model with central vote~$U$.  The
probability of generating vote $A(v)$ under $\phi$-$\Ham(U)$ is:
\[
  \textstyle
  \prob\big(A(v) \cond \phi\hbox{-}\Ham(U)\big) = \frac{1}{(1+\phi)^m} \phi^{\ham\big(U,A(v)\big)}
\]
(the normalizing constant is derived, e.g., by
\citet{car-kak-kar-kri:j:noise-approval}), and the probability of
generating election~$E$~is:
\begin{equation}
  \textstyle
  f_u(\phi) = \prob\big(E \cond \phi\hbox{-}\Ham(U)\big) = \frac{1}{(1+\phi)^{mn}} \phi^{\sum_{i=1}^n{\ham\big(U,A(v_i)\big)}}.
\end{equation}
For each fixed $\phi$, this value is maximized when the exponent,
$\sum_{i=1}^n{\ham\big(U,A(v_i)\big)}$, is minimized. This happens for
central vote $U$ such that for each candidate $c_i$, $c_i$ belongs to
$U$ if and only if at least half of the voters approve $c_i$. We refer
to such a central vote as \emph{majoritarian}.
Let us fix $U$ to be
majoritarian and let $h = \sum_{i=1}^n{\ham\big(U,A(v_i)\big)}$. By definition,
we have $h \leq \nicefrac{mn}{2}$, and we assume that $h > 0$
(otherwise it suffices to take $\phi = 0$ to maximize the probability
of generating $E$).  The derivative of $f_u(\phi)$ (with respect to
$\phi$) is:
\begin{align*} \textstyle
  f'_u(\phi)  = \textstyle\frac{h\phi^{h-1}(1+\phi)^{mn} - mn(1+\phi)^{mn-1}\phi^h}{(1+\phi)^{2mn}} 
  \textstyle = \frac{\phi^{h-1}(1+\phi)^{mn-1}}{(1+\phi)^{2mn}}(h(1+\phi) - mn\phi).
\end{align*}
By analyzing the final term, one can verify that its value is~$0$
exactly for $\phi = \frac{h}{mn-h}$, for smaller $\phi$ it is
positive, and for larger $\phi$ it is negative.  Hence, we have
the following result.

\begin{proposition}
  Let $E = (C,V)$ be an approval election with $m$ candidates and $n$
  voters, where $V = (v_1, \ldots, v_n)$. Let $U$ be the majoritarian
  central vote for $E$ and let $h = \sum_{i=1}^n{\ham(u,v_i)}$. The
  probability of generating $E$ using $\phi$-Hamming model is
  maximized for the majoritarian central vote and
  $\phi = \nicefrac{h}{mn-h}$.
\end{proposition}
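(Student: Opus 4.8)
The plan is to follow the two‑stage optimization that is implicitly sketched before the statement: first pin down the optimal central vote independently of $\phi$, then optimize the single remaining parameter $\phi$ by elementary calculus.

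First I would record the closed form of the likelihood. Using the per‑vote normalizing constant $\nicefrac{1}{(1+\phi)^m}$ (derived by \citet{car-kak-kar-kri:j:noise-approval}) together with independence across voters from \Cref{eq:prob}, for any central vote $U$ we get
\[
  f_U(\phi) = \prob\big(E \cond \phi\hbox{-}\Ham(U)\big) = \frac{1}{(1+\phi)^{mn}}\,\phi^{H(U)}, \qquad H(U) = \textstyle\sum_{i=1}^n \ham\big(U, A(v_i)\big).
\]
Since $\phi \in [0;1]$, the map $x \mapsto \phi^x$ is nonincreasing, so for every fixed $\phi$ the value $f_U(\phi)$ is maximized by any $U$ minimizing $H(U)$. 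Because Hamming distance decomposes candidate by candidate, $H(U) = \sum_{j=1}^m |\{ i : v_i(c_j) \neq [c_j \in U] \}|$, and each summand is minimized independently by placing $c_j \in U$ exactly when at least half of the voters approve $c_j$; that is, $U$ is the majoritarian central vote $U^\star$. Hence $\max_{U,\phi} f_U(\phi) = \max_\phi f_{U^\star}(\phi)$, and it remains to optimize over $\phi$ alone.

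Next I would set $h = H(U^\star)$ and dispose of the degenerate cases. For each candidate the majority side contains at least $\nicefrac{n}{2}$ voters, hence contributes at most $\nicefrac{n}{2}$ to $h$, so summing over the $m$ candidates gives $h \le \nicefrac{mn}{2}$. If $h = 0$ (all votes equal $U^\star$), then $f_{U^\star}(\phi) = \nicefrac{1}{(1+\phi)^{mn}}$ is maximized at $\phi = 0 = \nicefrac{h}{mn-h}$ and the claim holds; otherwise assume $h > 0$. Differentiating $f_{U^\star}(\phi) = \phi^h (1+\phi)^{-mn}$ as in the display preceding the statement and factoring out the term $\phi^{h-1}(1+\phi)^{-mn-1}$, which is strictly positive on $(0;1)$, the sign of $f'_{U^\star}(\phi)$ agrees with that of $h(1+\phi) - mn\phi = h - (mn - h)\phi$. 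This is positive for $\phi < \nicefrac{h}{mn-h}$, zero at $\phi = \nicefrac{h}{mn-h}$, and negative for larger $\phi$; moreover $h \le \nicefrac{mn}{2} \le mn - h$ ensures $\nicefrac{h}{mn-h} \in [0;1]$. Thus $f_{U^\star}$ is unimodal on $[0;1]$ with its maximum at $\phi = \nicefrac{h}{mn-h}$, which together with the first stage proves that the probability of generating $E$ is maximized by the majoritarian central vote and $\phi = \nicefrac{h}{mn-h}$.

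The argument is essentially routine; the only points requiring care are the joint‑to‑separate reduction (valid precisely because $x \mapsto \phi^x$ is nonincreasing for \emph{every} admissible $\phi$, so the same $U^\star$ is optimal uniformly in $\phi$), the bound $h \le \nicefrac{mn}{2}$ that keeps the critical point inside $[0;1]$, and the boundary situations $\phi \in \{0,1\}$ and $h \in \{0, \nicefrac{mn}{2}\}$. I do not expect a genuine obstacle beyond bookkeeping these edge cases.
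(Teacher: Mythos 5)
Your proof is correct and follows essentially the same route as the paper: the same closed-form likelihood $\phi^{H(U)}/(1+\phi)^{mn}$, the same observation that for fixed $\phi$ one minimizes the Hamming exponent candidate-by-candidate to get the majoritarian central vote, and the same derivative/sign analysis yielding the critical point $\phi = \nicefrac{h}{mn-h}$. The extra bookkeeping you add (explicitly noting that $x \mapsto \phi^x$ is nonincreasing for all admissible $\phi$, and checking that $h \le \nicefrac{mn}{2}$ keeps the optimum in $[0;1]$) only makes explicit what the paper leaves implicit.
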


\subsubsection{Resampling and Other IAMs}
Next, let us consider learning the parameters of $2$-parameter IAMs
(or, equivalently, of the resampling models). The solution is
intuitive, but requires a more careful proof. In particular, the next
theorem restricts the parameter space that we need to analyze.

\begin{theorem}\label{thm:2-iam}
  Let $E = (C,V)$ be an election with candidate set
  $C = \{c_1, \ldots, c_m\}$, and
  voter collection $V = (v_1, \ldots, v_n)$, such that
  $|V(c_1)| \geq |V(c_2)| \geq \cdots \geq |V(c_m)|$. Then
  $\prob(E \cond (p_1,p_2)\hbox{-}\text{IAM})$ is maximized for some
  $m' \in [m]$ and:
\begin{align*}
  p_1 &= \nicefrac{|V(c_1)|+ \cdots + |V(c_{m'})|}{nm'}, \quad C_1 = \{c_1, \ldots, c_{m'}\}, \\ 
  p_2 &= \nicefrac{|V(c_{m'+1})|+ \cdots + |V(c_{m})|}{n(m-m')}, \quad  C_2 = C \setminus C_1.
\end{align*}
\end{theorem}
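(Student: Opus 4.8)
The plan is to reduce the problem to a finite search over threshold partitions. By \Cref{pro:iam-decompose}, the optimal $2$-parameter IAM must be of the form $\text{IAM}(C_1, C_2)$ for some partition $(C_1, C_2)$ of $C$, with $p_1 = \pr_{E(C_1)}(C_1)$ and $p_2 = \pr_{E(C_2)}(C_2)$. So the only thing left to prove is that among all partitions of $C$ into two parts, there is an optimal one that is a \emph{prefix partition} with respect to the sorted order of approval scores: $C_1 = \{c_1, \ldots, c_{m'}\}$ and $C_2 = \{c_{m'+1}, \ldots, c_m\}$ for some $m' \in [m]$. (The boundary case $C_2 = \emptyset$, i.e.\ $m' = m$, is allowed and just recovers a $1$-parameter IAM.)

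First I would fix an optimal partition $(C_1, C_2)$ and, without loss of generality, assume $p_1 \geq p_2$ where $p_j = \pr_{E(C_j)}(C_j)$; if not, swap the roles. The key step is an exchange argument: suppose $(C_1, C_2)$ is not a prefix partition, so there exist candidates $c_i \in C_2$ and $c_j \in C_1$ with $|V(c_i)| \geq |V(c_j)|$ (we can even pick $i < j$). I want to show that moving $c_i$ into $C_1$ and $c_j$ into $C_2$ — or some sequence of such swaps — does not decrease $\prob(E \cond \text{IAM}(C_1, C_2))$, and that iterating this process terminates at a prefix partition. The cleanest way to see a single swap helps: write the log-likelihood contribution of a group $B$ with optimal probability $\pr_E(B)$ as $|B| \cdot n \cdot \mathcal{H}(\pr_E(B))$, where $\mathcal{H}(q) = q \ln q + (1-q)\ln(1-q)$ (the negative binary entropy, a convex function); by \Cref{prop:learn:p-ic} this is exactly $\ln \prob(E(B) \cond \pr_E(B)\hbox{-}\text{IC})$. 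So the total log-likelihood is $n \big( |C_1|\,\mathcal{H}(p_1) + |C_2|\,\mathcal{H}(p_2) \big)$ plus the same quantity for $E(C_2)$, i.e.\ a weighted sum of $\mathcal{H}$ evaluated at the two group-averages of the per-candidate approval fractions $a_c = |V(c)|/n$.

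The heart of the argument is thus a purely analytic claim: given a multiset of values $a_1, \ldots, a_m \in [0,1]$ and a fixed size $|C_1| = m'$, the quantity $m' \mathcal{H}\big(\mathrm{avg}(C_1)\big) + (m - m')\mathcal{H}\big(\mathrm{avg}(C_2)\big)$ is maximized (over all size-$m'$ subsets $C_1$) by taking $C_1$ to be the $m'$ largest values. Because $\mathcal{H}$ is convex, pushing the two group averages apart increases the sum (this is the "grouping the extremes" principle, provable by a direct two-point exchange: if $a_i \in C_2$, $a_j \in C_1$ with $a_i > a_j$, swapping them moves $\mathrm{avg}(C_1)$ up and $\mathrm{avg}(C_2)$ down by equal total amount, and convexity of $\mathcal{H}$ — formally, a short computation using $\mathcal{H}''(q) = \frac{1}{q(1-q)} > 0$ and the fact that the swap changes each average monotonically — shows the sum weakly increases). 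Iterating over all inversions yields a prefix partition without loss. Finally, I would range over all $m' \in [m]$ (and also allow the degenerate $m' = m$ to cover the one-group case) to get that the global optimum over $2$-parameter IAMs is attained at one of these $m$ candidate partitions, which is exactly the statement. The main obstacle is the convexity/exchange step: one must handle boundary values $a_c \in \{0,1\}$ carefully (where $\mathcal{H}$ and its derivative blow up or are defined by continuity), and be careful that the swap argument is correctly set up so that each swap strictly reduces the number of inversions while not decreasing the objective, guaranteeing termination at a prefix partition.
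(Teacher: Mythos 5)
Your proposal is correct and follows essentially the same route as the paper's proof: an exchange argument over partitions of fixed sizes, reduced to showing that the plug-in log-likelihood, viewed as a function of the total approval mass assigned to the higher-probability group, is nondecreasing. Your convexity-of-negative-entropy step is exactly the paper's \Cref{claim:monotonicity}, since with $\mathcal{H}(q)=q\ln q+(1-q)\ln(1-q)$ one has $\ln f(x)=A_1\mathcal{H}(x/A_1)+A_2\mathcal{H}\big((t-x)/A_2\big)$, whose derivative $\mathcal{H}'(x/A_1)-\mathcal{H}'\big((t-x)/A_2\big)$ is nonnegative precisely when the first group's average approval is at least the second's.
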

 In other words, this theorem says that a 2-parameter IAM that best
 captures a given election puts the most and the least approved
 candidates in two different groups.

\begin{proof}[Proof of \Cref{thm:2-iam}]
  Let the notation be as in the statement of the theorem. For a subset
  $B \subseteq C$ of candidates, let $\app(B) = \sum_{c \in B}|V(c)|$
  be the total number of approvals that the candidates from $B$
  receive within election $E$, and let
  $p(B) = \nicefrac{\app(B)}{|C|n}$ be the probability that a randomly
  selected voter approves a member of $B$. We let $t = \app(C)$ be the
  overall number of approvals cast within~$E$.

  Let us fix some partition of $C$ into $C'_1$ and
  $C'_2 = C \setminus C'_1$, where neither $C'_1$ nor $C'_2$ is empty,
  and where $p(C'_1) \geq p(C'_2)$.
  The probability that $E$ is generated under IAM$(C'_1,C'_2)$ is as
  follows (where $x = \app(C'_1)$, $A_1 = |C'_1|n$, and
  $A_2 = |C'_2|n$; note that $t-x = \app(C'_2)$):
  \begin{align*}
    \textstyle
    f(x)\!=\!  {\left(\frac{x}{A_1}\right)\!}^x\!{\left(\frac{A_1-x}{A_1}\right)\!}^{A_1-x}\!
    {\left(\frac{t-x}{A_2}\right)\!}^{t-x}\!\left(\frac{A_2-(t-x)}{A_2}\right)^{A_2-(t-x)}.
  \end{align*}
  Intuitively, $A_1$ and $A_2$ are the numbers of approval-disapproval
  decisions that voters have to make regarding the candidates in $C_1$
  and $C_2$, respectively.  The first two factors of $f(x)$ give the
  probability that the decisions regarding candidates in $C_1$ are
  made as in $E$ (the former corresponds to approvals and the latter
  to disapprovals), and the next two factors give analogous
  probability for the candidates in $C_2$.  We consider $f$ as a
  function defined for real arguments $x$ such that
  $\frac{x}{A_1} \geq \frac{t-x}{A_2}$ and $x \leq A_1$ (the
  assumption that $\frac{x}{A_1} \geq \frac{t-x}{A_2}$ corresponds to
  $p(C'_1) \geq p(C'_2)$ and can be equivalently expressed as
  $x \geq \frac{A_1t}{A_1+A_2} = \frac{A_1t}{nm}$). Later on, we will
  prove the following claim.
  \begin{claim}\label{claim:monotonicity}
    Function $f$, defined on arguments $x$ such that
    $\frac{x}{A_1} \geq \frac{t-x}{A_2}$, is nondecreasing.
  \end{claim}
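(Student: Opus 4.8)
The plan is to show that the logarithmic derivative $(\ln f)'$ is nonnegative on the interior of the stated domain. First I would pin down that domain. It is the set of real $x$ with $\frac{x}{A_1}\ge\frac{t-x}{A_2}$ and $x\le A_1$, and for the four factors of $f(x)$ to be genuine probabilities raised to nonnegative powers we implicitly also need $0\le t-x\le A_2$. Intersecting these constraints yields the closed interval $[x_0,\min(A_1,t)]$ with $x_0=\frac{A_1 t}{A_1+A_2}$; here one uses $t=\app(C)\le mn=A_1+A_2$ (the number of approvals cannot exceed the number of approval/disapproval decisions) to verify $x_0\ge t-A_2$ and $x_0\le\min(A_1,t)$, so the domain is a single interval rather than something disconnected. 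On its interior all four bases appearing in $f(x)$ lie strictly in $(0,1)$, so $f$ is positive and smooth there; it therefore suffices to show $\ln f$ is nondecreasing on the interior and then extend by continuity (under the convention $0^0=1$) to the endpoints.

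Next I would differentiate. Writing, with $x_1=x$ and $x_2=t-x$,
\[
  \ln f(x)=\sum_{j=1,2}\Big[\,x_j\ln\tfrac{x_j}{A_j}+(A_j-x_j)\ln\tfrac{A_j-x_j}{A_j}\,\Big],
\]
and using that the bracketed expression, as a function of $x_j$, has derivative $\ln\tfrac{x_j}{A_j}+1-\ln\tfrac{A_j-x_j}{A_j}-1=\ln\tfrac{x_j}{A_j-x_j}$ (the constant terms cancel inside each bracket), the chain rule with $\tfrac{dx_1}{dx}=1$, $\tfrac{dx_2}{dx}=-1$ gives
\[
  (\ln f)'(x)=\ln\frac{x}{A_1-x}-\ln\frac{t-x}{A_2-(t-x)}.
\]
Introducing the two empirical approval probabilities $p_1=\frac{x}{A_1}$ and $p_2=\frac{t-x}{A_2}$, one has $\frac{x}{A_1-x}=\frac{p_1}{1-p_1}$ and $\frac{t-x}{A_2-(t-x)}=\frac{p_2}{1-p_2}$, so $(\ln f)'(x)=\ell(p_1)-\ell(p_2)$ where $\ell(z):=\ln\frac{z}{1-z}$ is the logit, which is strictly increasing on $(0,1)$.

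Finally, the defining hypothesis of the claim, $\frac{x}{A_1}\ge\frac{t-x}{A_2}$, is exactly $p_1\ge p_2$, whence $(\ln f)'(x)=\ell(p_1)-\ell(p_2)\ge 0$ on the interior of the domain; thus $\ln f$, and therefore $f$, is nondecreasing there, and continuity extends this to the closed interval, which is the claim. I do not expect a substantive obstacle: the only points needing care are (i) confirming, via $t\le A_1+A_2$, that the domain is a single interval, and (ii) the boundary behaviour, where individual factors of $f$ hit $0^0$ and $(\ln f)'$ may diverge, but monotonicity is unaffected. The heart of the argument is just the familiar fact that the Bernoulli maximum-likelihood estimate equals the empirical frequency, applied separately to $C'_1$ and $C'_2$ and combined through the monotonicity of the logit.
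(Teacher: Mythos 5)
Your proof is correct and follows essentially the same route as the paper's: both pass to $\ln f$, compute the derivative $\ln\frac{x}{A_1-x}-\ln\frac{t-x}{A_2-(t-x)}$, and deduce nonnegativity from the hypothesis $\frac{x}{A_1}\geq\frac{t-x}{A_2}$. Your phrasing of the last step via monotonicity of the logit is just a cleaner packaging of the paper's algebraic manipulation, and your extra care about the domain being a single interval and about boundary points is a welcome (if not strictly necessary) addition.
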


  Let us consider some two candidates $c_1 \in C'_1$ and
  $c_2 \in C'_2$, and a partition obtained from $C'_1, C'_2$ by
  swapping their membership in these sets:
  \begin{align*}
    C''_1 = (C'_1 \setminus \{c_1\}) \cup \{c_2\},
    \text{ and }
    C''_2 = (C'_2 \setminus \{c_2\}) \cup \{c_1\}.
  \end{align*}
  We observe that the probability of generating $E$ under
  $\text{IAM}(C''_1,C''_2)$ is equal to $f(x-\app(c_1)+\app(c_2))$.
  This means that if $\app(c_2) \geq \app(c_1)$, then swapping $c_1$
  and $c_2$ does not decrease the probability of generating
  $E$. Consequently, if we take partition $C^\star_1$ and $C^\star_2$
  of $C$ such that $C^\star_1$ contains $|C'_1|$ candidates with the
  highest approval scores and $C^\star_2$ contains the remaining ones,
  then $\text{IAM}(C^\star_1,C^\star_2)$ maximizes the probability of
  generating $E$ among $2$-parameter IAMs that partition the
  candidates into groups with $|C'_1|$ and $|C'_2|$ candidates, where
  the former group has at least as high approval probability as the
  latter one (indeed $C_1^\star, C_2^\star$ can be obtained from any
  such parition by a sequence of swaps that do not decrease the
  probability of generating $E$). This gives exactly the statement of
  our theorem.

  It remains to show that \Cref{claim:monotonicity} holds. To this
  end, we consider $g(x) = \ln( f(x) )$, where $\ln(\cdot)$ is the natural logarithm. We have:
  \begin{align*}
    \textstyle
    g(x)  &= \textstyle  x\ln\left( \frac{x}{A_1} \right) + (A_1-x)\ln\left( \frac{A_1-x}{A_1} \right) \\
          &+ \textstyle (t-x) \ln \left( \frac{t-x}{A_2} \right) + (A_2-(t-x))\ln\left(\frac{A_2-(t-x)}{A_2}\right).
  \end{align*}
  Naturally, $g(x)$ is nondecreasing if and only if $f$ is. Next, we
  compute the derivative of $g$ (see also the explanations below):
  \begin{align*}
    \textstyle
    g'(x) &= \textstyle  \ln\left( \frac{x}{A_1} \right) -\ln\left( \frac{A_1-x}{A_1} \right) \\
          &  \textstyle -\ln\left( \frac{t-x}{A_2} \right) + \ln\left(\frac{A_2-(t-x)}{A_2}\right) \\
          &= \textstyle \ln\left( \frac{x}{A_1-x} \cdot \frac{A_2-(t-x)}{t-x} \right) \geq 0.
  \end{align*}
  For the final inequality, note that we have assumed that
  $\frac{x}{A_1} \geq \frac{t-x}{A_2}$.  This is equivalent to
  $\frac{x}{(A_1-x)+x} \geq \frac{t-x}{A_2 - (t-x) + (t-x)}$, which
  itself implies
  $\frac{(A_1-x)+x}{x} \leq \frac{A_2 - (t-x) + (t-x)}{t-x}$ and,
  after simplification,
  $\frac{A_1-x}{x} \leq \frac{A_2 - (t-x)}{t-x}$. Hence, the argument
  under the logarithm in the above inequality is at
  least~$1$. Finally, since $g'(x)$ is nonnegative, $g$ is
  nondecreasing and so is~$f$.  
\end{proof}

Intuitively, \Cref{thm:2-iam} says that if we want to find the
parameters of a $2$-parameter IAM that maximize the probability of
generating a given election $E = (C,V)$, then there are only
polynomially many options to try. Namely, using the notation from
\Cref{thm:2-iam}, it suffices to try all $O(m)$ choices of $m'$, each
giving a different candidate partition, and---among those---select the
one that leads to maximizing the probability of generating $E$.

\begin{corollary}
  There is a polynomial-time algorithm that given an election
  $E = (C,V)$ finds the parameters of a $2$-parameter IAM that
  maximizes the probability of generating~$E$.
\end{corollary}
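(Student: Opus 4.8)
The plan is to turn \Cref{thm:2-iam} directly into an algorithm. First I would sort the candidates by approval score so that $|V(c_1)| \geq |V(c_2)| \geq \cdots \geq |V(c_m)|$, which costs $O(m\log m)$ time. By \Cref{thm:2-iam}, some maximizer of $\prob(E \cond (p_1,p_2)\hbox{-}\text{IAM})$ has the form $\text{IAM}(C_1,C_2)$ with $C_1 = \{c_1,\ldots,c_{m'}\}$ and $C_2 = C\setminus C_1$ for one of the $m$ values $m' \in [m]$; when $m' = m$ the second group is empty and the model degenerates to a single-parameter IAM, which still counts as a $2$-parameter IAM under our convention. Thus only $m$ candidate partitions need to be examined, and each of them automatically satisfies $\pr_E(C_1) \geq \pr_E(C_2)$, because a prefix average of a non-increasing sequence is at least the corresponding suffix average.

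For each $m' \in [m]$ the algorithm fixes the partition above and sets $p_1 = \pr_E(C_1)$ and, when $m' < m$, $p_2 = \pr_E(C_2)$; by \Cref{pro:iam-decompose} (equivalently \Cref{cor:learn:iam}) these are precisely the per-group probabilities that maximize $\prob(E\cond \text{IAM}(C_1,C_2))$ once the partition is fixed, so no search over probability values is needed. To compare the $m$ options the algorithm evaluates the log-likelihood rather than the likelihood itself (the latter being a product of $nm$ factors from $[0;1]$, hence potentially astronomically small): combining \Cref{eq:iam-decompose} with the $p$-IC likelihood formula from the proof of \Cref{prop:learn:p-ic},
\begin{align*}
  \ln\prob\big(E \cond \text{IAM}(C_1,C_2)\big)
  &= \app_E(C_1)\ln p_1 + \big(nm' - \app_E(C_1)\big)\ln(1-p_1) \\
  &\quad{}+ \app_E(C_2)\ln p_2 + \big(n(m-m') - \app_E(C_2)\big)\ln(1-p_2),
\end{align*}
where the standard convention $0\ln 0 = 0$ absorbs the boundary cases $p_i \in \{0,1\}$ and the empty second group, so that no logarithm of $0$ is ever evaluated with a nonzero coefficient. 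Precomputing the prefix sums $\app_E(\{c_1,\ldots,c_j\})$ lets the algorithm compute each such value in constant time, so the whole scan uses $O(m)$ arithmetic operations on numbers of polynomial bit length.

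The algorithm returns the $m'$ (together with the induced partition and the probabilities $p_1,p_2$) attaining the largest log-likelihood. Since $\ln(\cdot)$ is monotone, this is the best $2$-parameter IAM of the threshold form guaranteed by \Cref{thm:2-iam}, and hence the best $2$-parameter IAM overall; correctness is therefore immediate from \Cref{thm:2-iam,pro:iam-decompose}. The total running time is $O(m\log m)$ for the sort plus $O(m)$ for the scan, which is polynomial. There is no genuinely hard step here once \Cref{thm:2-iam} is available: the only points that require a little care are to work in log-space and to treat the probabilities $0$ and $1$ and the empty group via the usual conventions.
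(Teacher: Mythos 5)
Your proposal is correct and matches the paper's argument: the corollary is obtained exactly by enumerating the $O(m)$ threshold partitions guaranteed by \Cref{thm:2-iam}, fixing the per-group probabilities via \Cref{prop:learn:p-ic}, and selecting the best option. The extra details you supply (prefix sums, working in log-space, handling the degenerate $m'=m$ and $p_i\in\{0,1\}$ cases) are sensible refinements of the same approach rather than a different route.
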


Using the ideas from the proof of \Cref{thm:2-iam}, we also obtain an
analogous result for IAMs with arbitrary number~$t$ of parameters.

\begin{theorem}\label{thm:t-iam}
  Let $E = (C,V)$ be an election with $m$ candidates and $n$
  voters. For each $t \in [m]$,
  $\prob(E \cond (p_1,p_2, \ldots, p_t)\hbox{-}\text{IAM})$ is
   maximized for $p_1, \ldots, p_t$ and a partition of $C$ into
   $C_1, \ldots, C_t$ such that
  for each $i \in [t]$ we have
  $p_i = \nicefrac{\sum_{c \in C_i}|V(c)|}{n|C_i|}$ and for each
  $i \in [t-1]$ and every two candidates $a \in C_i$ and
  $b \in C_{i+1}$ we have $|V(a)| \geq |V(b)|$.
\end{theorem}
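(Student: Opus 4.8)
The plan is to carry the exchange argument behind \Cref{thm:2-iam} over to an arbitrary number of groups. By \Cref{pro:iam-decompose} (together with \Cref{prop:learn:p-ic}), a probability-maximizing $t$-parameter IAM always has the form $\mathrm{IAM}(C_1,\dots,C_t)$: once the partition is fixed, the optimal probabilities are forced to be $p_i=\bigl(\sum_{c\in C_i}|V(c)|\bigr)/(n|C_i|)$. So it suffices to show that, among all partitions of $C$ into $t$ nonempty parts, some probability-maximizing one is \emph{sorted} --- meaning that, writing $\ell(c)$ for the index of the part containing $c$, we have $|V(c)|\ge|V(c')|$ whenever $\ell(c)<\ell(c')$; this is equivalent to the per-pair condition in the statement (chain the condition through the intermediate, nonempty parts). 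I would start from any probability-maximizing $t$-partition (one exists, there being finitely many) and repeatedly modify it, never decreasing $\prob(E\cond\mathrm{IAM}(\cdot))$, until it is sorted.

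Two operations suffice. First, \emph{relabelling} the parts changes nothing, since $\prob(E\cond\mathrm{IAM}(C_1,\dots,C_t))=\prod_i\prob\bigl(E(C_i)\cond p_i\hbox{-}\mathrm{IC}\bigr)$ is symmetric in the parts. Second, \emph{swapping} two candidates between parts $C_i$ and $C_j$ while keeping all part sizes fixed: by iterating \Cref{eq:iam-decompose}, only the two factors of $\prob(E\cond\mathrm{IAM}(\cdot))$ indexed by $i$ and $j$ change, and, with the remaining candidates frozen, their product is exactly the function $f$ analysed in the proof of \Cref{thm:2-iam} (with $A_1=n|C_i|$, $A_2=n|C_j|$, and $x$ the number of approvals received by the candidates in $C_i$). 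Hence \Cref{claim:monotonicity} applies directly: if $C_i$ has average approval probability at least that of $C_j$, then moving the higher-scoring of the two swapped candidates into $C_i$ increases $x$ while staying inside the domain of $f$, so it does not decrease $\prob(E\cond\mathrm{IAM}(\cdot))$.

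For termination I would use the potential $\Phi=\sum_{\ell(c)<\ell(c'),\,|V(c)|<|V(c')|}\bigl(|V(c')|-|V(c)|\bigr)$, a nonnegative integer that is $0$ exactly when the partition is sorted. While $\Phi>0$, one of two things holds. Either two consecutive parts $C_i,C_{i+1}$ have their average approval probabilities in the wrong order, in which case swapping their labels is probability-neutral and a short computation shows it strictly decreases $\Phi$ (by an amount proportional to the gap between the two averages). Or the averages of consecutive parts are already non-increasing --- hence so are the averages of all parts --- in which case, since $\Phi>0$, there must be a score inversion between consecutive parts, i.e., $a\in C_i$ and $b\in C_{i+1}$ with $|V(a)|<|V(b)|$; as $C_i$'s average is at least $C_{i+1}$'s, swapping $a$ and $b$ does not decrease $\prob(E\cond\mathrm{IAM}(\cdot))$ by the previous paragraph, and another short calculation shows it strictly decreases $\Phi$ (the term for the pair $(a,b)$ vanishes and, because $|V(a)|<|V(b)|$, no other pair's contribution increases). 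Each iteration strictly decreases the nonnegative integer $\Phi$, so the process halts at a probability-maximizing, sorted partition; combined with \Cref{pro:iam-decompose}, this is exactly the claim. I expect the main obstacle to be precisely this bookkeeping: the potential and the two moves have to be arranged so that \emph{every} step needed to reach $\Phi=0$ is one to which \Cref{claim:monotonicity} actually applies --- which is why the label-swaps are needed, handling exactly the situation where the relevant parts' averages are themselves out of order before a candidate-swap can be justified.
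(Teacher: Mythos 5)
Your proposal is correct and follows essentially the same route as the paper's proof: an exchange argument that reduces each candidate swap to the two-group case via \Cref{eq:iam-decompose} and \Cref{claim:monotonicity}. If anything, your write-up is more careful than the paper's, which performs the swaps without explicitly ensuring that the receiving group has the larger average approval probability (the precondition for \Cref{claim:monotonicity}) and asserts that ``applying this reasoning repeatedly'' terminates without an argument --- both points that your relabelling step and the potential $\Phi$ handle explicitly.
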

\begin{proof}
  Consider an election $E = (C,V)$ and some arbitrary partition of $C$
  into sets $C_1, \ldots, C_t$.  Assume that there are two candidates,
  $c_1 \in C_1$ and $c_x \in C_i$, where $i \in [t] \setminus \{1\}$,
  such that $c_x$ is approved by more voters than $c_1$. Form a
  partition $C'_1, \ldots, C'_t$ that is identical to
  $C_1, \ldots, C_t$, except that $c_1$ is in $C_i$ and $c_x$ is in
  $C_1$. Then, by the same argument as in the proof of
  \Cref{thm:2-iam}, we have the following (note that
  $C'_1 \cup C'_i = C_1 \cup C_i$):
  \begin{align*}
    \prob(E(C'_1 \cup C'_i)  \cond \text{IAM}(C'_1, C'_i) 
                            \geq \prob(E(C_1 \cup C_i)  \cond \text{IAM}(C_1, C_i)).                              
  \end{align*}
  Thus, by applying \Cref{eq:iam-decompose}, we see that the
  probability that IAM$(C'_1, \ldots, C'_t)$ generates $E$ is at
  least as high as that for IAM$(C_1, \ldots, C_t)$. By applying this
  reasoning repeatedly, we can transform $C_1, \ldots, C_t$ into a
  partition that satisfies the conditions from the theorem statement,
  without ever decreasing the probability of generating election $E$.
  This completes the proof as the initial choice of $C_1, \ldots, C_t$
  was arbitrary.
\end{proof}

Based on \Cref{thm:t-iam}, we derive a dynamic-programming algorithm
that computes a $t$-parameter
IAM that maximizes the probability of generating a given election.

\begin{theorem}\label{thm:t-iam-algo}
  There is a polynomial-time algorithm that given an election
  $E = (C,V)$ and an integer $t \in [|C|]$ finds the parameters of a
  $t$-parameter IAM that maximizes the probability of
  generating~$E$.
\end{theorem}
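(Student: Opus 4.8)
The plan is to combine \Cref{thm:t-iam} with a dynamic program over prefixes of the candidate sequence. First I would relabel the candidates so that $|V(c_1)| \geq |V(c_2)| \geq \cdots \geq |V(c_m)|$. By \Cref{thm:t-iam}, there is a $t$-parameter IAM that maximizes the probability of generating $E$ whose candidate groups are \emph{contiguous} blocks of this sorted sequence, and in which the approval probability of each group $C_i$ equals the approval fraction $\pr_{E(C_i)}(C_i)$. Hence it suffices to optimize over partitions of $c_1, \ldots, c_m$ into at most $t$ contiguous blocks. The catch, and the only real obstacle, is that there are exponentially many such partitions, so we cannot enumerate them directly; this is exactly what the dynamic program overcomes.

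Second, I would set up the DP. For $0 \leq j \leq m$ and $0 \leq k \leq \min(j,t)$, let $\mathrm{OPT}[j][k]$ be the maximum, over all partitions of $\{c_1, \ldots, c_j\}$ into $k$ nonempty contiguous blocks, of the log-likelihood $\ln \prob\big(E(\{c_1,\ldots,c_j\}) \cond \text{IAM of that partition}\big)$, where inside each block we use the approval-fraction probability (optimal by \Cref{prop:learn:p-ic}). We set $\mathrm{OPT}[0][0] = 0$ and use the recurrence
\[
  \mathrm{OPT}[j][k] \;=\; \max_{0 \leq i < j} \Big( \mathrm{OPT}[i][k-1] + \ell(i+1, j) \Big),
\]
where $\ell(a,b)$ denotes the log-likelihood of generating the sub-election on $c_a, \ldots, c_b$ under $p$-IC with $p$ equal to the approval fraction of that block; by (the proof of) \Cref{prop:learn:p-ic} this equals $\app\ln p + (\text{size}\cdot n - \app)\ln(1-p)$, with the degenerate cases of a block receiving no approvals or all approvals contributing log-likelihood $0$. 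After precomputing prefix sums of the approval scores $|V(c_i)|$, each value $\ell(a,b)$ and its associated probability is obtained in constant time. The algorithm returns $\mathrm{OPT}[m][t]$ together with the partition and per-block probabilities attaining it, recovered by storing the maximizing $i$ in each cell.

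Third, correctness and running time. Correctness follows by assembling three facts: \Cref{eq:iam-decompose} shows that the log-likelihood of any IAM factors as the sum of the per-group log-likelihoods, so the DP genuinely computes the best contiguous partition into $k$ blocks; \Cref{thm:t-iam} guarantees that an optimal $t$-parameter IAM is contiguous, so $\mathrm{OPT}[m][t]$ is the global optimum among $t$-parameter IAMs restricted to exactly $t$ groups; and since $t \leq |C|$, restricting from ``at most $t$'' to ``exactly $t$'' groups loses nothing, because splitting any block with at least two candidates into two sub-blocks—each taking its own optimal approval probability—never decreases the likelihood (it can only match or improve, by \Cref{prop:learn:p-ic}). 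For the running time, the table has $O(mt)$ cells, each filled in $O(m)$ time, giving $O(m^2 t)$ arithmetic operations, on top of an $O(m\log m)$ sort and $O(m)$ preprocessing—polynomial in the size of $E$, as claimed. The only subtle point in the whole argument is the reduction from the exponentially many contiguous partitions to the polynomial-size prefix DP; the remaining steps are routine bookkeeping (prefix sums, the all/none-approved corner cases, traceback).
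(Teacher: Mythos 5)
Your proposal is correct and follows essentially the same route as the paper: sort the candidates by approval score, invoke \Cref{thm:t-iam} to restrict attention to contiguous blocks with approval-fraction probabilities (via \Cref{prop:learn:p-ic} and \Cref{eq:iam-decompose}), and run a prefix dynamic program with the same recurrence, which the paper states multiplicatively over likelihoods while you state it additively over log-likelihoods (a reformulation the paper itself recommends in the remark following the theorem). The extra details you supply---prefix sums, the degenerate all/none-approved blocks, and the ``at most $t$'' versus ``exactly $t$'' observation---are harmless refinements of the same argument.
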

\begin{proof}
  Let $E = (C,V)$ be our input election, where
  $C = \{c_1, \ldots, c_m\}$, and $V = (v_1, \ldots, v_n)$, and let
  $t$ be the number of IAM parameters that we are to optimize. Without
  loss of generality, we assume that
  $|V(c_1)| \geq |V(c_2)| \geq \cdots \geq |V(c_m)|$.  For each
  $i, j \in [m]$, $i \leq j$, by $C[i,j]$ we mean the set
  $\{c_i, \ldots, c_j\}$.  For each subset $B \subseteq C$ of
  candidates and each integer $\ell \in [t]$, we let $f(B,\ell)$ be
  the highest possible probability of generating $E(B)$ using an
  $\ell$-parameter IAM. We will show an algorithm for computing
  $f(C,t)$.

  By \Cref{prop:learn:p-ic}, we know that for each subset
  $B \subseteq C$ of candidates, we can compute $f(B,1)$ in polynomial
  time (it corresponds to using an impartial culture model with
  approval probability $\pr_{E(B)}(B)$). Next, for each
  $\ell \in [t] \setminus \{1\}$ and $j \in [m]$, $j \geq \ell$, we
  have the following recursive relation between the values of $f$ (it
  is true due to \Cref{thm:t-iam}):
  \[
    f(C[1,j],\ell) = \max_{i \colon \ell-1 \leq i < j} \bigg( f(C[1,i],\ell-1)\cdot f(C[i+1,j],1) \bigg).
  \]
  Using this equation and standard dynamic-programming techniques, we
  can compute in polynomial time both the value $f(C,t)$ and the
  partition of $C$ into $C_1, \ldots, C_t$ such that:
  \[
    f(C,t) = f(C_1,1) \cdot f(C_2,1) \cdots f(C_t,1).
  \]
  IAM$(C_1, \ldots, C_t)$ is the model that maximizes the probability
  of generating $E$ among the $t$-parameter IAMs.
\end{proof}

We should remark that while we have expressed the algorithm in the
above theorem in terms of multiplication of the values of $f$, in
practice it is better to optimize the logarithms of the values of $f$
and, hence, use addition. This means optimizing log-likelihood of
generating $E$ instead of the actual likelihood. The two approaches
are formally equivalent, but the former is more stable numerically.

\subsection{Mixture Models: Expectation Maximization}\label{sec:em}

In this and the next section we consider two different approaches of
learning mixtures of IAMs. Here we start with
one of the most standard and widely adopted machine learning
algorithms used for estimating the values of parameters in statistical
models, the Expectation-Maximization (EM) algorithm
\citep{dempter1977maximum}. The algorithm is useful when there are
some missing or unobserved latent variables in the model. In the
context of learning mixtures of distributions, these latent variables
correspond to the assignment of the data points to the mixture
components that generated them. The algorithm first guesses the
parameters of the model and then iteratively improves the estimation
trying to maximize the log-likelihood of generating the observed
data. Each iteration consists of two steps:
\begin{description}
\item[E-step (Expectation),] where the algorithm computes the
  posterior probability (soft assignment) that each data point was
  generated by each mixture component.
\item[M-step (Maximization),] where the algorithm updates the
  parameters of the mixture model to maximize the expected
  log-likelihood given the probabilities from the E-step.
\end{description}
Intuitively, after each iteration the likelihood of generating the
observed data under the current model parameters increases. The
algorithm alternates between the E- and M-steps until convergence
(i.e.,\ until two consecutive iterations return the same model
parameters up to some negligible $\epsilon$), obtaining a local
maximum of the log-likelihood.

Let us now consider how the EM algorithm can be used for learning
mixtures of IAM models. Let us fix some election $E$ and the number
$K$ of components. For $k \in [K]$, the $k$-th component has
parameters $\theta_k = (\alpha_k, (p_1, p_2, \ldots, p_m))$,
where %
$\alpha_k$ is the weight of the component
($\sum_{k \in [K]} \alpha_k = 1$) and $p_1, \ldots, p_m\in [0;1]$ are
the probabilities of approving candidates $c_1, \ldots, c_m$ (these
probabilities are independent only for the full IAM model; however,
potential dependencies between them are not relevant for the
E-step). Given a $k$-th mixture component and its parameters
$\theta_k$, the probability of generating vote $A(v)$ by this
component is:
\begin{equation*}
  \mathbb{P}(A(v) | \theta_k) =
  \alpha_k \cdot \left(\Pi_{j\in [m]: c_j\in A(v)} p_{j} \right) \left(\Pi_{j\in [m]: c_j\notin A(v)} (1-p_j) \right).
\end{equation*}
Let us denote the posterior probability of generating votes from $E$, computed
in the E-step of the algorithm---further called the \emph{soft
  assignment} of votes from $E$---as
$\gamma = \{\gamma_{v, k}\}_{v\in V, k\in [K]}$. For each voter $v$
and the $k$-th component, $\gamma_{v, k}$ is equal to the conditional probability of
generating $v$ by the $k$-th component subject to the fact that $v$ has
been generated by \emph{some} component. Formally:
\begin{equation*}
  \textstyle
  \gamma_{v,k} = \frac{\mathbb{P}(A(v) | \theta_k)}{\sum_{j} \mathbb{P}(A(v) | \theta_j)}.
\end{equation*}
For clarity of the notation, for each $k \in [K]$ let us denote the total weight
of the votes assigned to the $k$-th component
as $\gamma_k = \sum_{v\in V} \gamma_{v, k}$. Note that
$\sum_{k \in [K]} \gamma_k = n$.

For the M-step, we need to find the parameters
$\theta_1, \theta_2, \ldots, \theta_K$ maximizing the expected
complete log-likelihood up to the computed soft assignment, given by
the following formula:
\begin{equation*}
  \textstyle
    \sum_{v\in V} \sum_{k \in [K]} \gamma_{v,k} \ln(\mathbb{P}(A(v) | \theta_k)).
\end{equation*}
Note that in the above formula the gamma variables should be viewed as
constants; their values depend on theta values computed in the
previous iteration of the algorithm, not on the ones we currently
search for.

Let us first focus on the $\alpha_k$ parameter of each component
$k \in [K]$. We have the following result here:

\begin{proposition}\label{prop:maximizing_log_likelihood_for_iam_mixtures}
  For each mixture of $K$ IAM distributions, the expected
  log-likelihood of generating the observed data is maximized if for
  each $k \in [K]$ it holds that $\alpha_k = \nicefrac{\gamma_k}{n}$. %
\end{proposition}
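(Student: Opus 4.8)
The plan is to isolate the dependence of the expected complete log-likelihood on the weight parameters $\alpha_1, \ldots, \alpha_K$ and optimize them via a Lagrange multiplier, treating the approval probabilities inside each component as fixed. First I would expand $\ln(\mathbb{P}(A(v) \mid \theta_k))$ using the formula given just above the proposition: since $\mathbb{P}(A(v) \mid \theta_k) = \alpha_k \cdot q_{v,k}$, where $q_{v,k} = (\prod_{j: c_j \in A(v)} p_j)(\prod_{j: c_j \notin A(v)}(1-p_j))$ collects the part that does not involve $\alpha_k$, the logarithm splits as $\ln \alpha_k + \ln q_{v,k}$. Substituting into the objective $\sum_{v \in V}\sum_{k \in [K]} \gamma_{v,k}\ln(\mathbb{P}(A(v)\mid\theta_k))$, the term $\sum_{v,k}\gamma_{v,k}\ln q_{v,k}$ is a constant with respect to the $\alpha_k$'s, so maximizing over the weights reduces to maximizing $\sum_{k \in [K]} \left(\sum_{v \in V}\gamma_{v,k}\right)\ln\alpha_k = \sum_{k \in [K]} \gamma_k \ln\alpha_k$, subject to $\sum_{k\in[K]}\alpha_k = 1$ and $\alpha_k \geq 0$.

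Next I would carry out the constrained optimization. Form the Lagrangian $L(\alpha,\lambda) = \sum_{k}\gamma_k\ln\alpha_k - \lambda(\sum_k \alpha_k - 1)$; setting $\partial L/\partial \alpha_k = \gamma_k/\alpha_k - \lambda = 0$ gives $\alpha_k = \gamma_k/\lambda$. Summing over $k$ and using $\sum_k \alpha_k = 1$ together with $\sum_k \gamma_k = n$ (noted just before the proposition) yields $\lambda = n$, hence $\alpha_k = \gamma_k/n$. To confirm this critical point is a maximum rather than a saddle or minimum, I would note that $\sum_k \gamma_k \ln \alpha_k$ is concave in $\alpha$ (each $\ln\alpha_k$ is concave, $\gamma_k \geq 0$), and the feasible region is convex, so the stationary point is the global maximum; alternatively, one can invoke Gibbs' inequality / nonnegativity of KL-divergence directly, since $\sum_k \gamma_k \ln\alpha_k$ is maximized over the simplex exactly at $\alpha_k \propto \gamma_k$.

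The main subtlety — rather than a genuine obstacle — is the interaction between the $\alpha_k$ parameters and the approval-probability parameters $(p_1,\ldots,p_m)$ within the objective. The key observation that makes the proof clean is that the objective is \emph{separable}: it decomposes into one term depending only on the weights and another depending only on the within-component probabilities, so optimizing the weights can be done independently of whatever the probabilities are (and irrespective of whether they are constrained, as in the resampling or Hamming variants, or free, as in full IAM). I would state this separability explicitly, since it is what licenses treating the $q_{v,k}$ factors as constants. One should also briefly address the degenerate case where some $\gamma_k = 0$: then the corresponding optimal $\alpha_k = 0$, and the term $\gamma_k \ln \alpha_k = 0\cdot(-\infty)$ is read as $0$ by the usual convention, so the formula $\alpha_k = \gamma_k/n$ remains correct.
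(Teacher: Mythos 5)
Your proposal is correct and follows essentially the same route as the paper's own proof: the same separation of the objective into an $\alpha$-dependent term $\sum_{k}\gamma_k\ln\alpha_k$ and an $\alpha$-independent remainder, followed by the same Lagrange-multiplier computation yielding $\lambda = \sum_k \gamma_k = n$ and $\alpha_k = \nicefrac{\gamma_k}{n}$. The only differences are minor additions on your part --- the explicit concavity argument certifying that the stationary point is a global maximum, and the treatment of the degenerate case $\gamma_k = 0$ --- both of which the paper's proof leaves implicit.
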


\begin{proof}
  We need to maximize the following expression:
  \begin{align*}
    \textstyle
    \sum_{v\in V} \sum_{k \in [K]} \gamma_{v,k} \ln(\mathbb{P}(A(v) &| \theta_k)) = \\
      \textstyle \sum_{v\in V} \sum_{k \in [K]} \gamma_{v,k} \ln(\alpha_k) &+
      \textstyle \sum_{v\in V} \sum_{k\in [K]} \gamma_{v,k}
      \ln(\frac{\mathbb{P}(A(v) | \theta_k)}{\alpha_k}).
  \end{align*}
  Note that the value of
  $\nicefrac{\mathbb{P}(A(v) | \theta_k)}{\alpha_k}$ does not depend
  on the value of $\alpha_k$ for each $k \in [K]$. Conseqeuntly, the
  second part of the sum is irrelevant for the maximization and can be
  skipped. Hence, we need to maximize:
  \begin{align*}
    \textstyle
    \sum_{v\in V} \sum_{k \in [K]} \gamma_{v,k} \ln(\alpha_k) = \sum_{k \in [K]}
    \ln(\alpha_k)\cdot \gamma_k.
  \end{align*}
  To maximize this expression subject to the condition
  $\sum_{k \in [K]} \alpha_k = 1$, we can use the Lagrangian multiplier
  method. The corresponding Lagrangian is as follows:
  \begin{align*}
    \textstyle
    \mathcal{L}((\alpha_1, \ldots, \alpha_K), \lambda) &= \sum_{k \in [K]}
    \ln(\alpha_k) \cdot \gamma_k + \lambda(1 - \sum_{k \in [K]} \alpha_k).
  \end{align*}
  By straightforward computations (computing partial derivatives with
  respect to each $\alpha_k$ and comparing them to $0$), we obtain
  $\alpha_k = \nicefrac{\gamma_k}{\lambda}$ for each $k \in [K]$ and,
  since $\sum_{k \in [K]} \alpha_k = 1$, we have that:
  \begin{equation*}
    \textstyle
    \lambda = \sum_{k \in [K]} \gamma_{k} = n.
  \end{equation*}
  Hence:
  \begin{equation*}
    \textstyle
    \alpha_k = \frac{\gamma_k}{\lambda} = \frac{\gamma_k}{n} \text{, \quad for
    each }k \in [K].
  \end{equation*}
  which completes the proof.
\end{proof}

Let us now focus on optimizing the remaining parameters of IAM
components. Note that since we know the optimal weights of the
components and the other parameters are independent between different
components, we can now optimize the parameters of each IAM component
separately.

Let us fix component number $k \in [K]$ and compute the probabilities
maximizing the expected log-likelihood of generating a corresponding
soft assignment $\gamma$ using the $k$-th component. Note that for
each $v\in V$ we have that $\gamma_{v, k}$ is a rational number, i.e.,
it is a quotient of two integers. Let $Q$ be the least common multiple
of all the denominators of these quotients.  We now consider an
election $E_{\gamma, k}$ obtained from the initial election $E$ and
the given soft assignment $\gamma$ by multiplying each voter $v$,
$Q\cdot \gamma_{v,k}$ times (so that for each two votes $v$ and $v'$,
the proportion between the numbers of their copies is equal to
$\nicefrac{\gamma_{v, k}}{\gamma_{v', k}}$), which we will further
call an election \emph{induced by} $E$, $\gamma$ and the $k$-th component. We
will show that maximizing the
log-likelihood of generating the part of $\gamma$ corresponding to the
considered $k$-th component is equivalent to maximizing the
probability of generating~$E_{\gamma, k}$.

\begin{proposition}\label{prop:maximization-simplification}
  Let $E = (C,V)$ be an approval election with $m$ candidates and $n$ voters and
  let $\gamma$ be the soft assignment of votes to some $K$ mixture components.
  Let $E_{\gamma, k}$ be the election induced by $E$, $\gamma$ and some $k$-th
  component for $k \in [K]$. Then finding the $k$-th mixture component
  parameters maximizing the log-likelihood of generating the assignment is
  equivalent to finding the parameters maximizing the probability of generating
  $E_{\gamma, k}$.
\end{proposition}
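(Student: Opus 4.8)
The claim is that maximizing, over the parameters $\theta_k$ of the $k$-th component, the quantity $\sum_{v\in V}\gamma_{v,k}\ln(\mathbb{P}(A(v)\cond\theta_k))$ is the same optimization problem as maximizing $\prob(E_{\gamma,k}\cond\theta_k)$, where $E_{\gamma,k}$ is the election in which each voter $v$ is duplicated $Q\cdot\gamma_{v,k}$ times. The natural approach is to write down both objective functions explicitly and show that one is a fixed monotone transformation of the other. First I would drop the component weight $\alpha_k$: by Proposition~\ref{prop:maximizing_log_likelihood_for_iam_mixtures} the optimal $\alpha_k$ is fixed independently of the remaining parameters, and in $\mathbb{P}(A(v)\cond\theta_k)$ the factor $\alpha_k$ contributes an additive $\ln(\alpha_k)\cdot\gamma_k$ term that does not interact with the approval probabilities $p_1,\dots,p_m$; so it suffices to work with the ``un-weighted'' per-component likelihood $\prod_{j:c_j\in A(v)}p_j\prod_{j:c_j\notin A(v)}(1-p_j)$.

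Next I would take logarithms of $\prob(E_{\gamma,k}\cond\theta_k)$. Since votes in $E_{\gamma,k}$ are drawn i.i.d., Equation~\eqref{eq:prob} gives
\[
  \ln\prob(E_{\gamma,k}\cond\theta_k)=\sum_{v\in V}(Q\cdot\gamma_{v,k})\ln\big(\mathbb{P}(A(v)\cond\theta_k)/\alpha_k\big),
\]
because each original vote $v$ appears $Q\cdot\gamma_{v,k}$ times in $E_{\gamma,k}$. Comparing with the E-M objective, the two expressions differ only by the global positive multiplicative constant $Q$ (and, once $\alpha_k$ is fixed by Proposition~\ref{prop:maximizing_log_likelihood_for_iam_mixtures}, by the additive constant coming from the $\ln\alpha_k$ terms, which is independent of $p_1,\dots,p_m$). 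A function and its scaling by a positive constant plus an additive constant are maximized at exactly the same arguments, and $\ln(\cdot)$ is strictly increasing, so $\prob(E_{\gamma,k}\cond\theta_k)$ and the E-M objective attain their maxima at the same parameter values. This is the whole argument.

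The only genuinely delicate point is the well-definedness of $E_{\gamma,k}$: the counts $Q\cdot\gamma_{v,k}$ must be nonnegative integers. Nonnegativity is immediate since each $\gamma_{v,k}\in[0;1]$, and integrality is exactly why $Q$ is chosen as the least common multiple of the denominators of the (rational) $\gamma_{v,k}$ values — here I would note that $\gamma_{v,k}$ is rational because it is a quotient of finite sums of products of the rational quantities produced in the previous E/M iteration (or, if one prefers, one simply restricts attention to rational parameter iterates, which is all a finite-precision implementation ever uses). A secondary point to mention explicitly is that $E_{\gamma,k}$ may have fewer distinct ``effective'' voters than $E$ when some $\gamma_{v,k}=0$, but this causes no trouble: such voters contribute $0$ to both objectives. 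I do not anticipate any real obstacle beyond carefully lining up the constants; the substance is just the observation that duplicating a voter $c$ times multiplies its log-likelihood contribution by $c$, which is exactly what the soft-assignment weights $\gamma_{v,k}$ do in the E-M objective.
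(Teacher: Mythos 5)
Your argument is correct and matches the paper's own proof: both identify the per-component EM objective $\sum_{v\in V}\gamma_{v,k}\ln(\mathbb{P}(A(v)\mid\theta_k))$ with $\ln\prod_{v\in V}\mathbb{P}(A(v)\mid\theta_k)^{\gamma_{v,k}}$, raise to the power $Q$ to turn the rational exponents into integer vote multiplicities in $E_{\gamma,k}$, and invoke the monotonicity of these transformations. Your explicit treatment of the $\alpha_k$ factor and of the well-definedness of $Q$ is slightly more careful than the paper, which silently absorbs $\alpha_k$ into the ``probability of generating $E_{\gamma,k}$,'' but this is a refinement of the same route rather than a different one.
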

\begin{proof}
  The formula for the expected complete log-likelihood is as follows:
  \begin{equation*}
    \textstyle
    \sum_{v\in V} \sum_{k \in [K]} \gamma_{v,k} \ln(\mathbb{P}(A(v) | \theta_k)).
  \end{equation*}
  Since the parameters of the mixture components are independent, our
  goal is to maximize the following expression for each $k \in [K]$:
  \begin{equation*}
    \textstyle
    \sum_{v\in V} \gamma_{v,k} \ln(\mathbb{P}(A(v) | \theta_k)).
  \end{equation*}
  The above formula is equivalent to the following one:
  \begin{equation*}
    \textstyle
    \Pi_{v\in V} \mathbb{P}(A(v) | \theta_k)^{\gamma_{v,k}}.
  \end{equation*}
  Then, by raising the above formula to the power of $Q$ (which does
  not affect the parameter values maximizing the expression) we
  obtain:
  \begin{equation*}
    \textstyle
    \Pi_{v\in V} \mathbb{P}(A(v) | \theta_k)^{Q\cdot \gamma_{v,k}}.
  \end{equation*}
  which is the probability of generating $E_{\gamma, k}$ by
  parameters~$\theta_k$.
\end{proof}

We also obtain direct formulas for optimal
parameters %
for other models from the preceding section (see \Cref{apdx:em}).

%
%
%
%

%
%
%
%
%
%
%
%
%
%

%
%
%
%
%
%
%
%
%

%
%
%
%
%
%
%
%
%
%
%

%
%
%
%
%
%
%
%
%
%
%
%
%
%

%

%

%
%
%
%
%
  
%
%
%
%
  
%
%
%
%

%

%
%
%

%
%
%

%
%
%

%
%
%

%
%

%
%
%
%
%
%
%
%
%
%
%
%
%
%
%
%
%
%
%
%

%

\subsection{Mixture Models: Bayesian Learning}\label{sec:bayes}

Consider an election $E = (C,V)$, with candidates $C = \left\{c_1, \ldots, c_m\right\}$ and voters $V = \left(v_1, \ldots, v_n\right)$.
So far we focused on learning parameter values that maximize the probability of generating $E$. Once learned,
model's parameters in these settings are, essentially, fixed constants. Alternatively, we can view the parameters themselves
as random variables. We can then estimate a distribution over the parameters that is compatible with the election
$E$. Learning the distribution over model's parameters conditioned on the observed data is the core concept in Bayesian
statistics.

One simple way to formalize a Bayesian model for an approval election is to postulate a \emph{generative process} for
the votes, i.e., a sampling procedure that describes our \emph{prior assumptions} about the distribution over the votes.
For example, consider the full IAM model parametrized by the vector of approval probabilities: $\left( p_1, p_2, \ldots
p_m \right)$. The generative process in this case starts with sampling each approval probability $p_i$ from a prior
distribution over its possible values. In this work we do not assume any a priori knowledge about the approval
probabilities. The prior distribution for approval probabilities is therefore the uniform distribution over the $[0, 1]$
interval. After sampling the parameters, the generative process samples the votes conditioned on the parameter values. In
the IAM model this conditional distribution is simply the Bernoulli distribution parametrized by the approval
probability. Together, these two steps give the following process:
\begin{enumerate}
  \item For all $c \in C$, sample the approval probability,  $p_c \sim U(0, 1)$.
  \item For all $v_i \in V, c \in C$, sample the vote outcome, $v_i(c) \sim \mathit{Bernoulli}(p_c)$.
\end{enumerate}
The generative process fixes the prior over model's parameters $p \left( p_1, \ldots, p_m \right)$ and the
data likelihood $p \left( V \mid p_1, \ldots, p_m \right)$. The Bayes theorem then gives a principled formula
for the \emph{posterior distribution} over the model's parameters: $p \left( p_1, \ldots, p_m \mid V \right)$.
This distribution summarizes our knowledge about values of the parameters, once we observed a set of votes $V$.

The Bayesian framework provides a flexible way to specify more complex generative processes. In particular, we can easily
write a generative process for a mixture of $K$ full IAM components (see \Cref{apdx:bayes_models} for other models):
\begin{enumerate}
  \item Sample component probabilities, $\left( \alpha_1, \ldots, \alpha_K \right) \sim \mathit{Dirich}(\mathbf{1}^K)$.
  \item For all $c \in C,\ k \in [K]$, sample the $k$-th component's
    approval probability for the candidate $c$, $p_{c, k} \sim U(0, 1)$.
  \item For all $v_i \in V$:
    \begin{enumerate}
    \item Sample the component index, $z \sim \mathit{Cat}(\alpha_1, \ldots, \alpha_K)$.
    \item For all  $c \in C$, sample $v_i(c) \sim \mathit{Bernoulli}(p_{c, z})$.
    \end{enumerate}
\end{enumerate}
Here, $\mathit{Dirich}(\mathbf{1}^K)$ is the Dirichlet distribution with unit concentration parameters, while $\mathit{Cat}(\alpha_1,
\ldots, \alpha_K)$ is the categorical distribution parametrized by components' probabilities. Note that the Dirichlet
prior in our IAM mixture is, again, uninformative: Dirichlet distribution with unit concentrations is the
uniform distribution over the $K-1$ dimensional probability simplex. Using similar prior distributions we can also
formulate Bayesian models for other models (see the full version of the paper).

The flexibility of Bayesian models comes with a price: due to the intractable normalization constant in the Bayes rule, it
is typically impossible to evaluate the posterior distribution exactly. That said, there are efficient, general-purpose
algorithms that can be used to draw samples from the posterior distribution. %
We generate posterior samples
using the No-U-turn sampler~\citep{Hoffman2014} with variable elimination~\citep{Obermeyer2019} for the component
assignments and Gibbs sampling for the central votes. To this end, we implement and estimate our models in the NumPyro
probabilistic programming language~\citep{Phan2019}.

After sampling from the posterior distribution, we approximate posterior means of model's parameters by averaging across
sampled values. We then use these mean estimates in downstream analyses. Note that our models use so-called exchangeable
priors: the model specification and, consequently, the posterior density is invariant to permutation of component
labels. In a naive implementation, samples from such models may differ in the ordering of components, leading to
incorrect mean estimates. We remedy this issue by using a standard \emph{identifiability constraint}
technique~\citep{Jasara2005}. In particular, we restrict the Dirichlet prior on the components' probabilities to the
polytope that satisfy the constraint: $\alpha_1 > \alpha_2 > \cdots > \alpha_K$, and put zero prior probability mass
elsewhere. This constraint uniquely identifies one out of $K!$ equivalent component labellings. In practice, the
constraint can be enforced post sampling, by reordering the components in the samples~\citep{Jasara2005}.

\subsection{Why MLE and  Bayesian Learning}\label{sec:mle_vs_bayes}

We described two different ways to estimate parameters in our mixture
models, MLE and Bayesian learning. Hence, one may ask why estimate the
same model with two different statistical frameworks? Our motivation
for this choice comes from specific strong points of both methods. In
particular, maximizing the data likelihood is a conceptually simple
estimation criterion. It does not require choosing any distribution
other than the components' distributions (and fixing the number of
components). The corresponding EM algorithm is one of the standard
choices for fitting mixture models. As we will see, it also gives us
fairly well estimated mixture models for approval elections.  That
said, extensions of the EM algorithm to novel component distributions
for approval elections require derivation of the needed update
equations. Bayesian learning is more flexible in this respect:
contemporary probabilistic programming frameworks, such as
NumPyro~\citep{Phan2019}, can conveniently express complex generative
processes and provide generic algorithms to estimate them. Priors in
Bayes models may also serve as regularization terms, preventing us,
e.g., from ascribing zero approval probability to a candidate that
happened to have no approvals in the training data. Finally, while we
do not pursue this direction in our work, posterior distributions may,
in principle, provide uncertainty estimates for the inferred
quantities. Nevertheless, Bayesian modelling comes with a conceptually
more elaborate statistical framework. It requires care when choosing
prior distributions, especially in mixture models where label
switching may affect inferences. Finally, estimation of Bayesian
models often relies on dedicated modelling software.

\section{Experiments}

Our experiments focus on learning variants of IAMs, as well as
their mixtures, on elections from the Pabulib
database~\citep{fal-fli-pet-pie-sko-sto-szu-tal:c:pabulib}.
Specifically, we considered all $271$ approval-based Pabulib elections
that include at least $2\ 000$ voters.  For each election $E = (C,V)$
from this set, and each considered algorithm $\calA$ (for a given IAM
variant) we  executed the following procedure $t_\try = 5$ times (each time
using independent coin tosses; for the experiment described in \Cref{sec:dim} we used
$t_\try = 20$):
\begin{enumerate}
\item We formed elections $E_\learn$ and $E_\eval$, where the latter
  consisted of randomly selected $n_\eval = 1000$ votes from $E$, and
  the former consisted of the remaining votes. If $E_\learn$ ended up
  with more than $20\ 000$ voters, then we kept only
  $n_\sample = 20\ 000$ of them, selected uniformly at random (to
  bound the computation time). We refer to $E_\learn$ as the
  \emph{learning} election and to $E_\eval$ as the \emph{evaluation}
  one.
\item We run $\calA$ on $E_\learn$ and obtained distribution
  $D \in \calD(C)$.
\item We computed the log-likelihood of obtaining $E_\eval$ using~$D$,
  as well as a few other metrics (see \Cref{sec:metrics}).
\end{enumerate}
We only performed $5$ runs of this procedure because we found that the
variance of the results that we get (i.e., variance of the metrics
that we computed) was typically several orders of magnitude lower than
the results themselves.

For each of our elections, we ran all the algorithms from
\Cref{sec:learn-single-iam}, i.e.,\ the single-component algorithms for
IC, Hamming, Resampling, full IAM and all the $t$-parameter IAM models for $t$
ranging from~$1$ to the number of candidates in the given election
(with a step of~$1$). Then, we applied Bayesian learning
(\Cref{sec:bayes}) to compute mixture models with $2$, $3$, and $4$
components, where each of the components was either a Hamming model, a
resampling model, or full IAM.  Finally, we applied the EM algorithm
(\Cref{sec:em}) to learn mixture models of $2$, $3$, and $4$ full IAM
components (we omitted Hamming and resampling models due to
computation cost).

\subsection{Evaluation Metrics}\label{sec:metrics}

While we could use log-likelihoods of the models that we learn to
evaluate their quality, this has drawbacks.
For example, it is 
difficult to compare log-likelihood values across different elections. Thus, we
use metrics based on the voter-anonymous variant of the Hamming
distance, defined below.

\begin{definition}
  Let $E = (C,V)$ and $F = (C,U)$ be two elections over the
  same candidate set $C$, with voter collections
  $V = (v_1, \ldots, v_n)$ and
  $U = (u_1, \ldots, u_n)$ of equal size. Their
  voter-anonymous Hamming distance is as follows ($S_n$ is the
  set of permutations of $[n]$):
\begin{equation*}
  \textstyle
  \vah(E, F) =\frac{1}{n} \min_{\sigma \in S_n}\sum_{i=1}^\ell \ham\big(A(v_i), A(u_{\sigma(i)})\big).
\end{equation*}
\end{definition}
In other words, $\vah(E,F)$ is the average Hamming distance
between the votes from $E$ and $F$, matched in such a way as to
minimize the final result. Note that our definition is similar to the
definitions of isomorphic distances of
\citet{fal-sko-sli-szu-tal:c:isomorphism} and
\citet{szu-fal-jan-lac-sli-sor-tal:c:sampling-approval-elections},
except that we consider election with equal candidate sets.
In particular, voter-anonymous Hamming
distance is invariant to reordering the voters and is normalized by
the number of voters.

\paragraph{Baseline Distance}
Let $E$ be an election from the subset of Pabulib that we consider. We
define $E$'s baseline distance as the expected value of the random
variable defined as $\vah(E_1,E_2)$, where $E_1$ and $E_2$ are
subelections of $E$, each with $n_\eval$ voters, selected uniformly at
random up to the condition that $E_1$ and $E_2$ do not have any voters
in common.\footnote{E.g., it means that if some voter $v$ from
  $E$ is included in $E_1$ then he or she is certainly not included in
  $E_2$. However, $E_2$ may contain other voter $u$ with
  $A(v) = A(u)$.} Intuitively, baseline distance is a measure of an
election's internal diversity. For example, if its value is $2$ then
if we take two random, disjoint subelections of $E$ (each with
$n_\eval$ voters), it would be possible, on average, to match their
votes so that two matched votes differ on two candidates
(e.g., each of them may include a single candidate not present
in the other one).
In practice, we compute baseline distance of an
election by drawing $5$ pairs of elections and averaging their
voter-anonymous Hamming distance (typically, the variance is orders of
magnitude lower than the value of the average, so considering $5$
pairs of elections %
is justified).

\paragraph{Absolute and Relative Distances}
Given a Pabulib election $E = (C,V)$ and a learning algorithm $\calA$,
we compute their \emph{absolute distance} as follows: For each
evaluation election $E_{\eval}$ that we computed for $E$ and $\calA$,
we take distribution $D \in \calD(C)$ obtained by $\calA$ on
$E_\learn$, generate election $E_D$ by drawing $n_\eval$ votes
independently from $D$, and compute $\vah(E_\eval, E_D)$. We obtain
five numbers and we output their average value.  We define the
\emph{relative distance} between $E$ and $\calA$ as their absolute
distance divided by $E$'s baseline.  In other words, relative distance
normalizes the absolute one by $E$'s inherent diversity %
($E$'s baseline is, essentially, %
its absolute distance from a distribution that samples $E$'s votes
uniformly at random so, intuitively, it bounds %
achievable absolute distance).

\begin{figure}
  \centering
  \includegraphics[width=5.2cm]{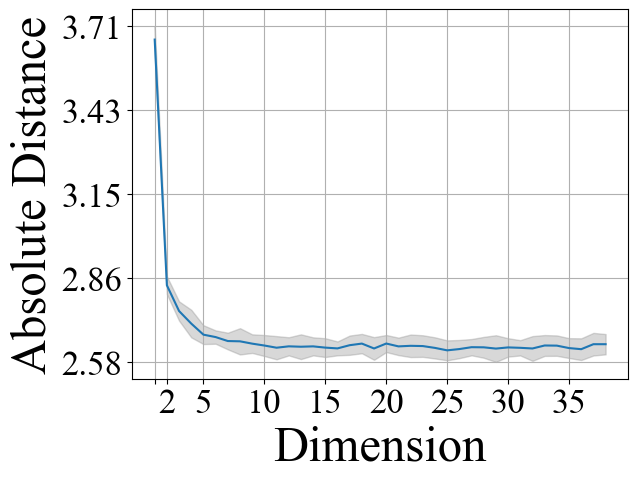}~\qquad
  \includegraphics[width=5.2cm]{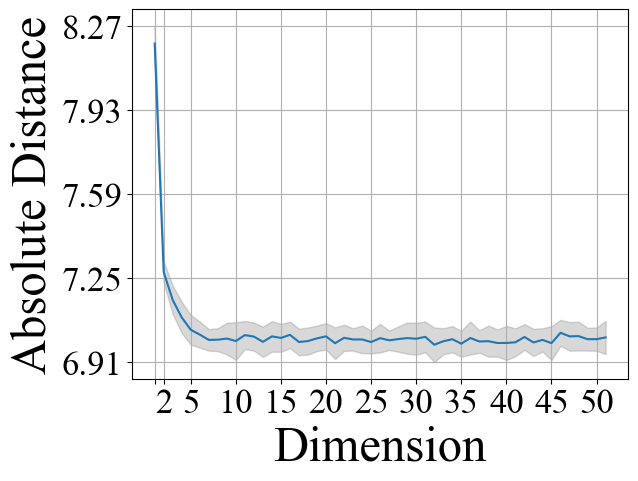}%
  \caption{\label{fig:dim}Absolute distance between the Amsterdam 289
    election (38 candidates, left) or Warszawa 2020 Ochota election
    (51 candidates, right) and single-component
    $t$-parameter  IAMs, as a function of
    $t$, from $1$ to the number of candidates.}
\end{figure}

\subsection{Impact of the Number of IAM Parameters}\label{sec:dim}

Let us now focus on single-component IAMs and the influence that the
number of parameters has on their ability to learn Pabulib elections.
In particular, in~\Cref{fig:dim} we plot the absolute distance between
elections generated using $t$-parameter IAM models learned (on two
example Pabulib elections) using the algorithm from
\Cref{thm:t-iam-algo}.  We find that for IC ($t=1$) we get a
significantly higher absolute distance than for the resampling model
($t=2$), which itself is somewhat higher than the absolute distance
for full-IAM ($t$ equal to the number of candidates).
The plots for other elections are very similar in spirit.

Our conclusion from this experiment is that impartial culture performs
notably worse than the other IAM variants, but models with two parameters
and more achieve fairly similar results (even if
there is still a visible difference between the results for the
resampling model and full IAM). In the following experiments we %
limit our attention to the Hamming, resampling, and full IAM models,
as they are simple to learn, give good results, and the former two can
be specified using much less information than full IAMs.

\subsection{General Analysis of Learning Results}\label{sec:general-analysis}

\begin{figure}[t!]
     \centering
     \includegraphics[width=6.2cm]{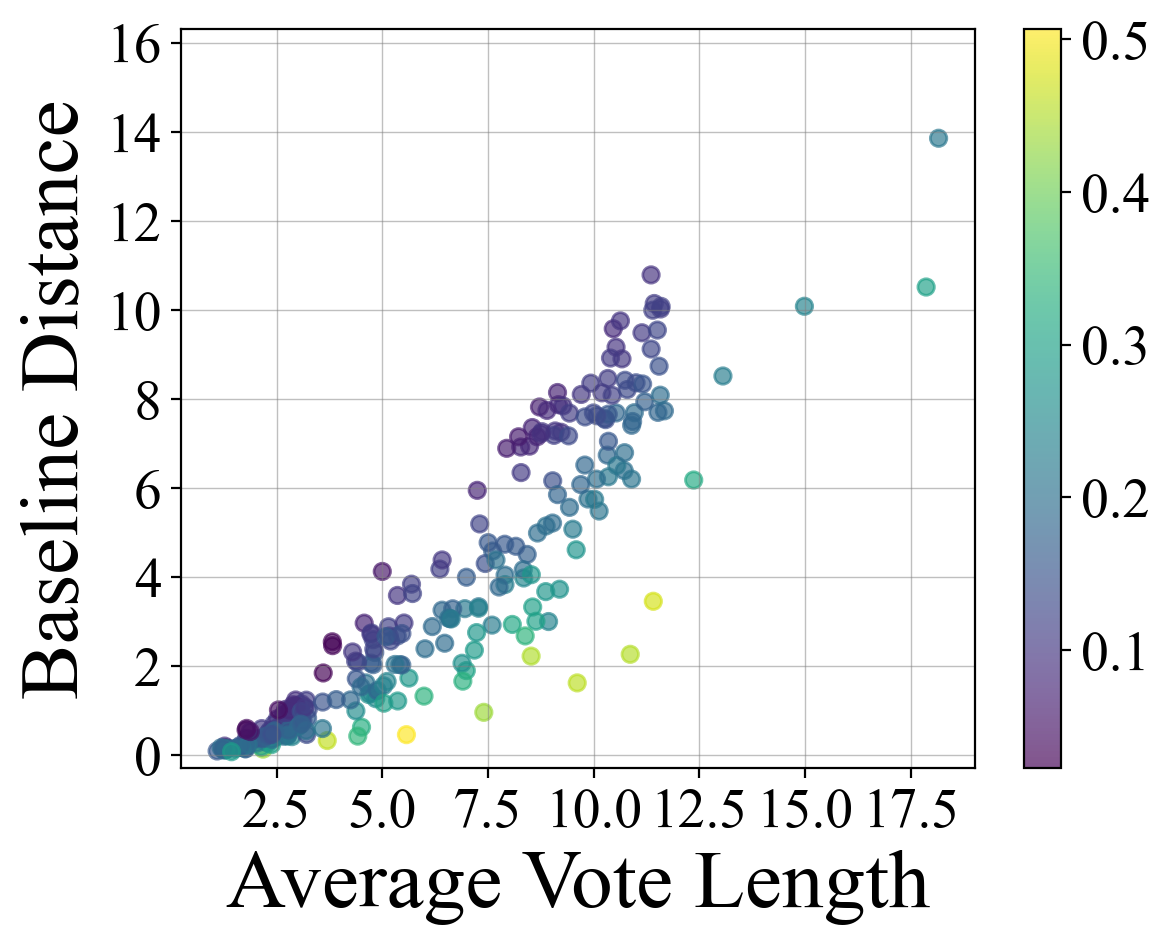}%
     \includegraphics[width=6.2cm]{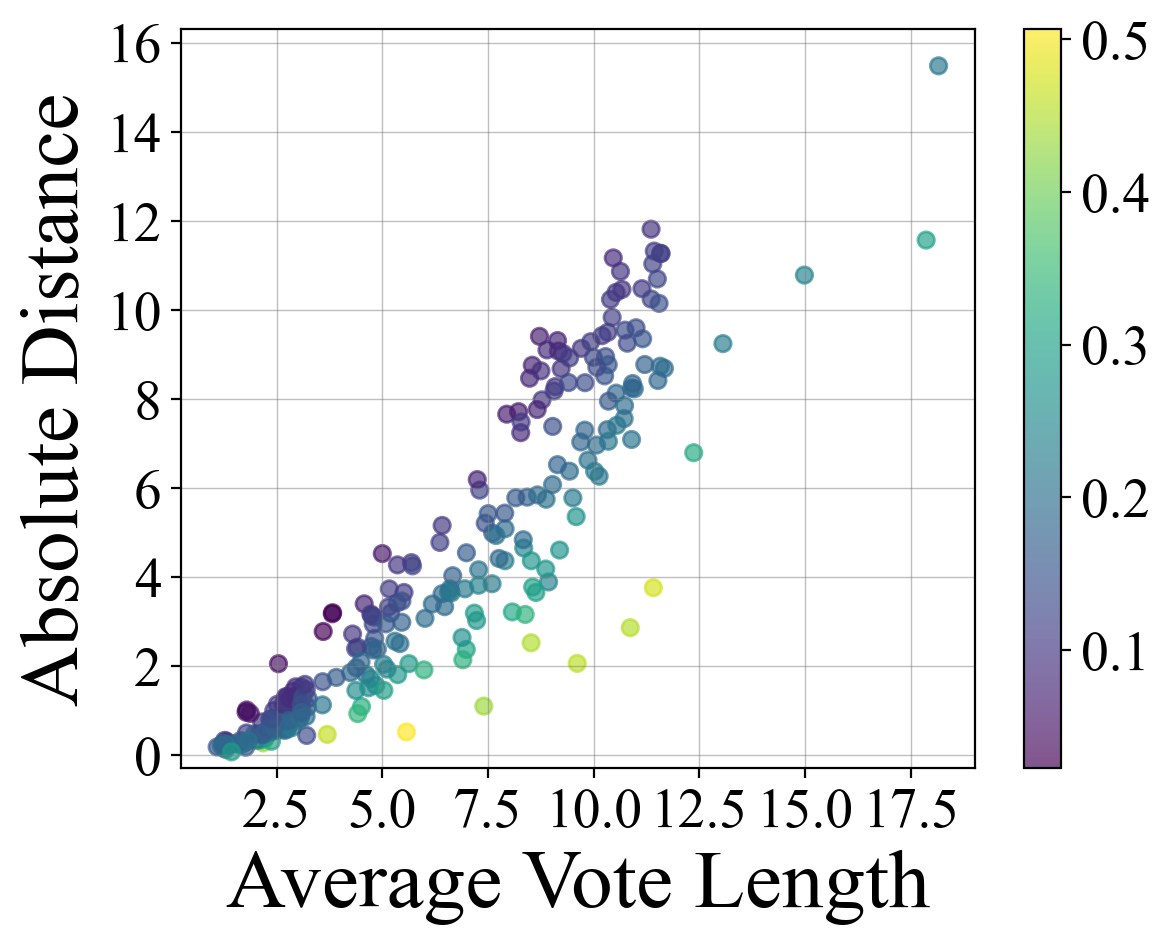}%
     \caption{\label{fig:avg_vote_len_vs_ham_dist}The relation between
       the average vote length and baseline distance (left plot), and
       absolute distances from the best learned model (right
       plot). Each dot depicts a single Pabulib instance. The color
       gives the profile's saturation (i.e., the average vote length
       divided by the number of candidates).}
\end{figure}

\begin{figure}[t!]
     \centering
     \includegraphics[width=6.2cm]{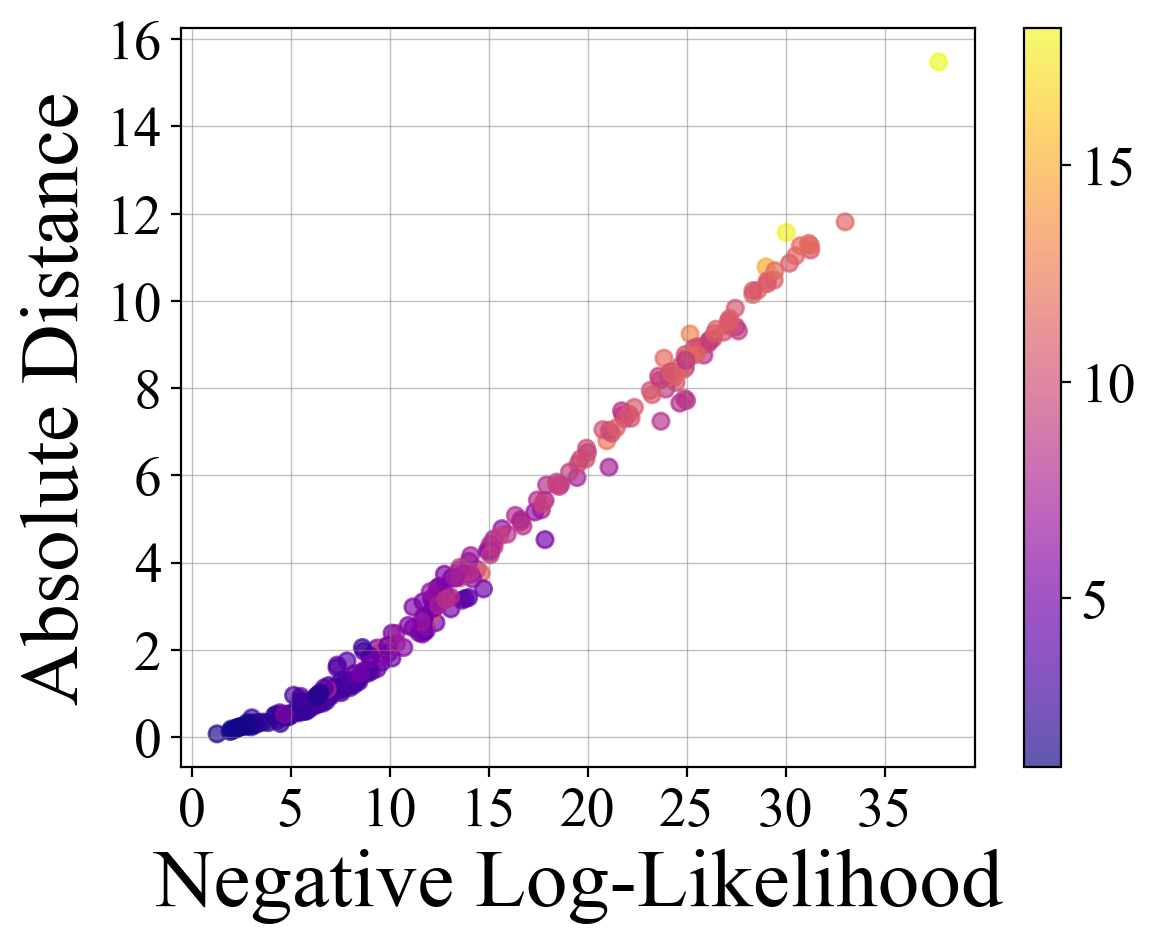}%
     \includegraphics[width=6.2cm]{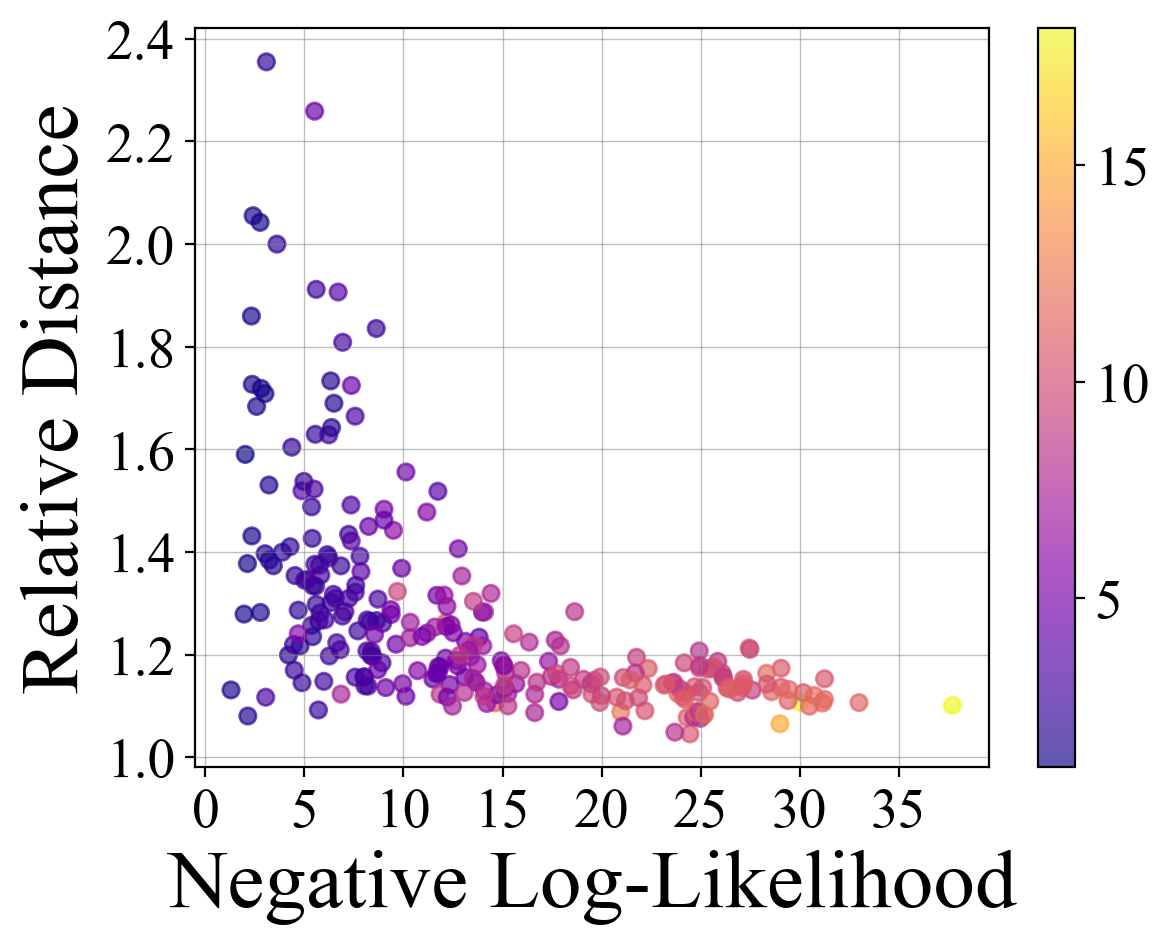}
    \caption{\label{fig:ll_vs_ham_dist}The relation between the (negative) log-likelihood and the absolute distance (left plot), and relative distance (right plot). Each dot depicts a single Pabulib instance. The color corresponds to the average vote length.}
\end{figure}

\begin{figure}[t!]
     \centering
     \includegraphics[width=6.3cm]{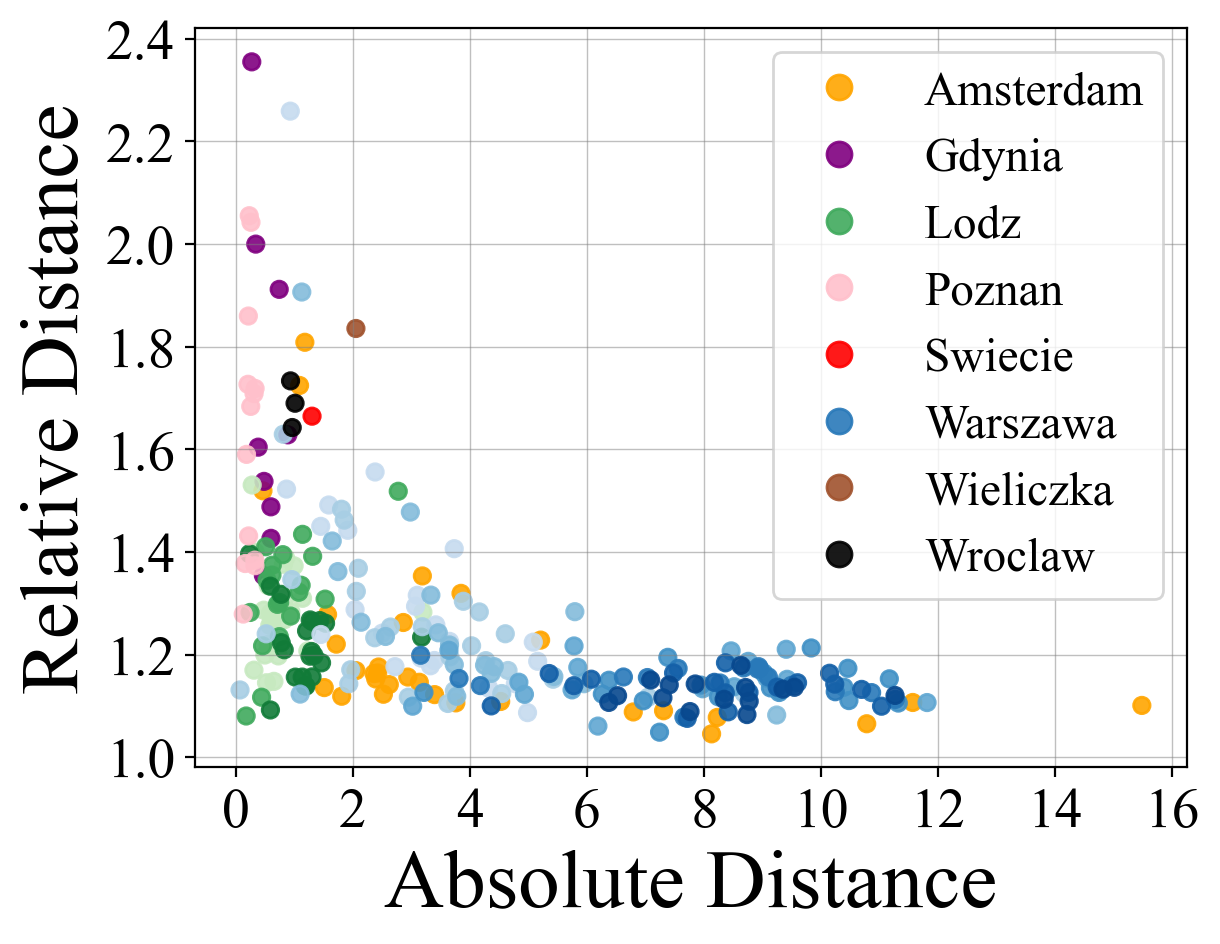}%
     \includegraphics[width=6.3cm]{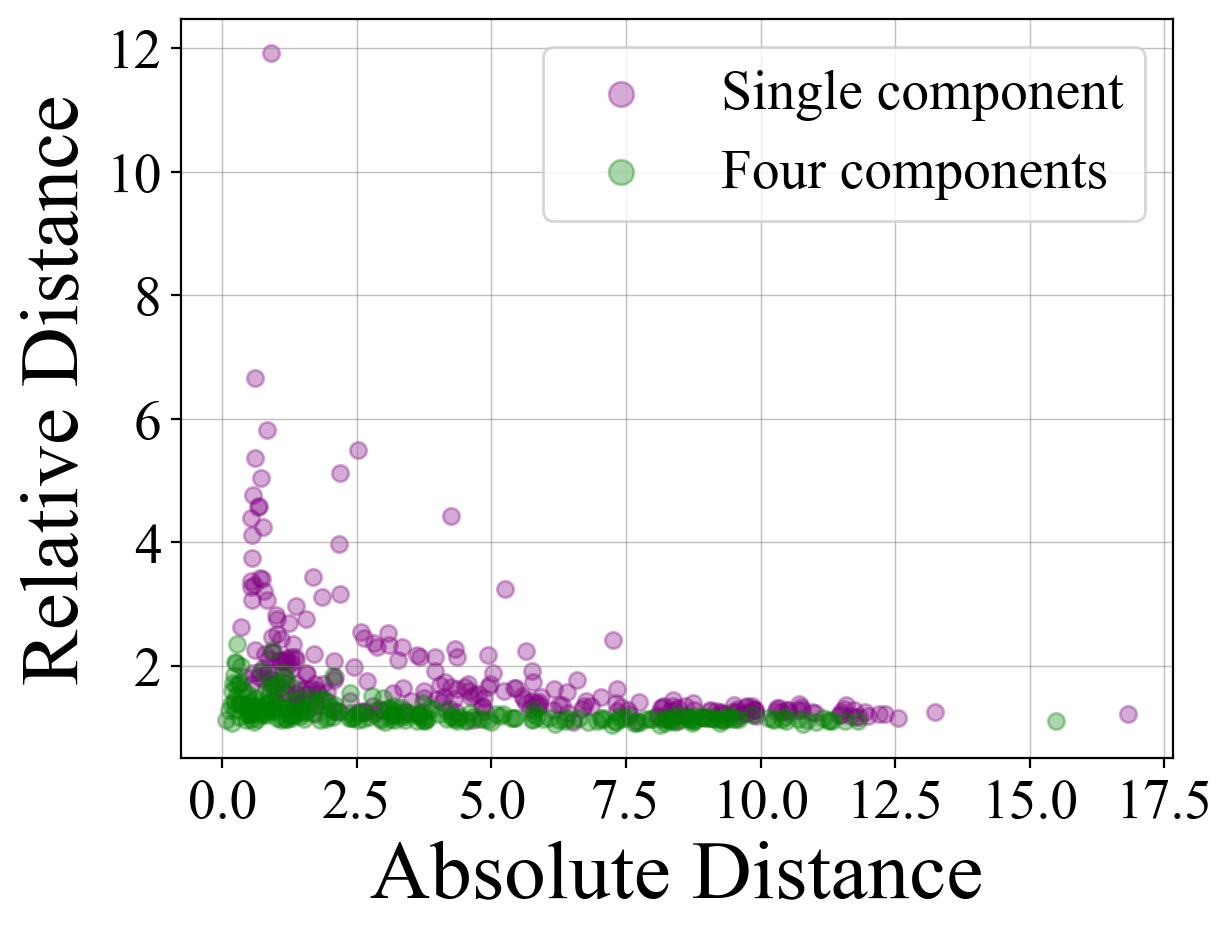}
    \caption{\label{fig:city_and_components}The comparison of the absolute and relative distances.
    The left plot shows from which city each instance originates
    (for Lodz and Warszawa we use different shades of green and blue, respectively, for
    different years).
    The right plot compares the single- and multi-component approaches. Each dot depicts a single Pabulib instance
    (the number of dots doubles in the right plot due to two approaches used).}
\end{figure}

\begin{figure}[t!]
     \centering
     \includegraphics[width=4.9cm]{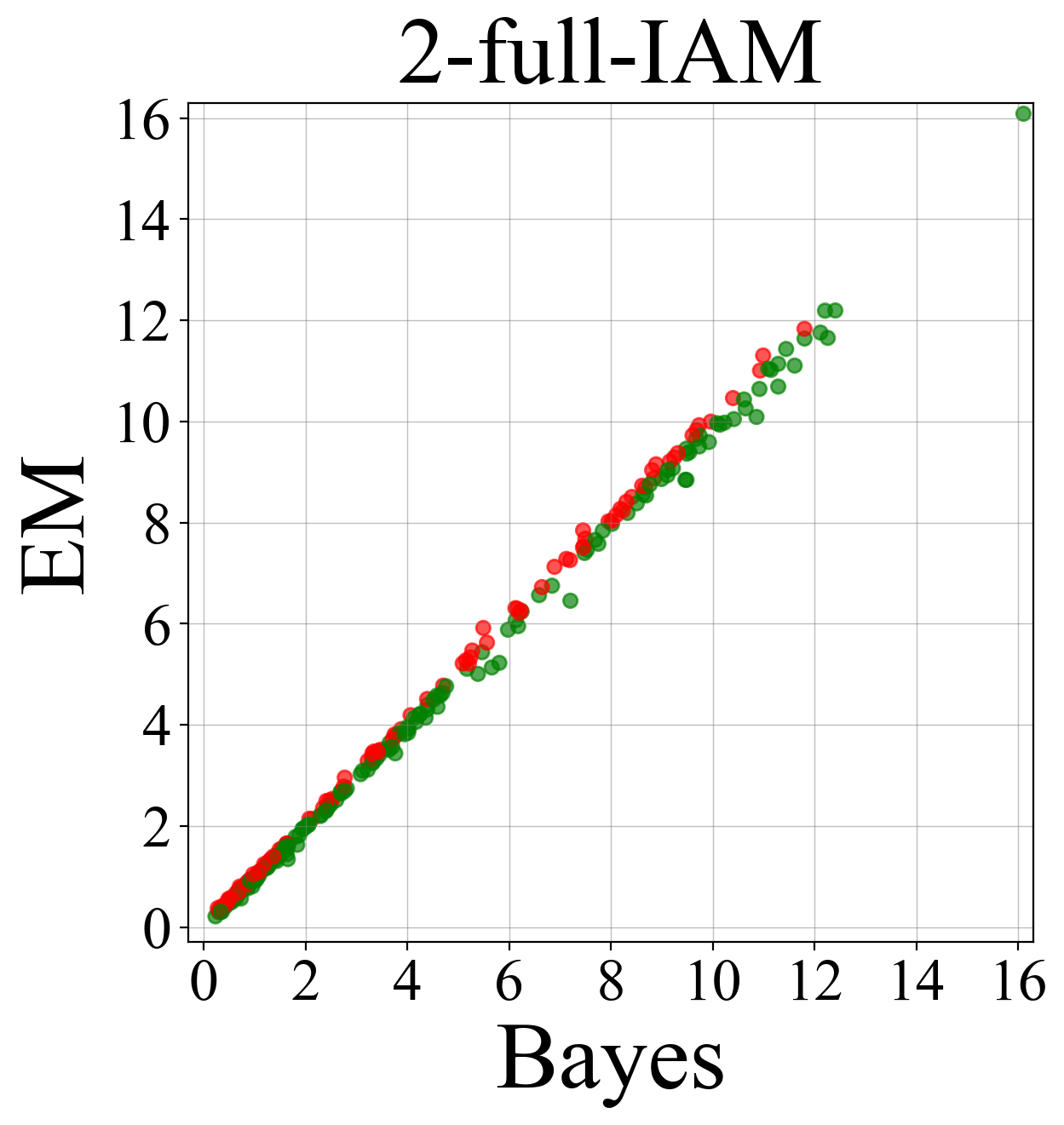}%
     \includegraphics[width=4.9cm]{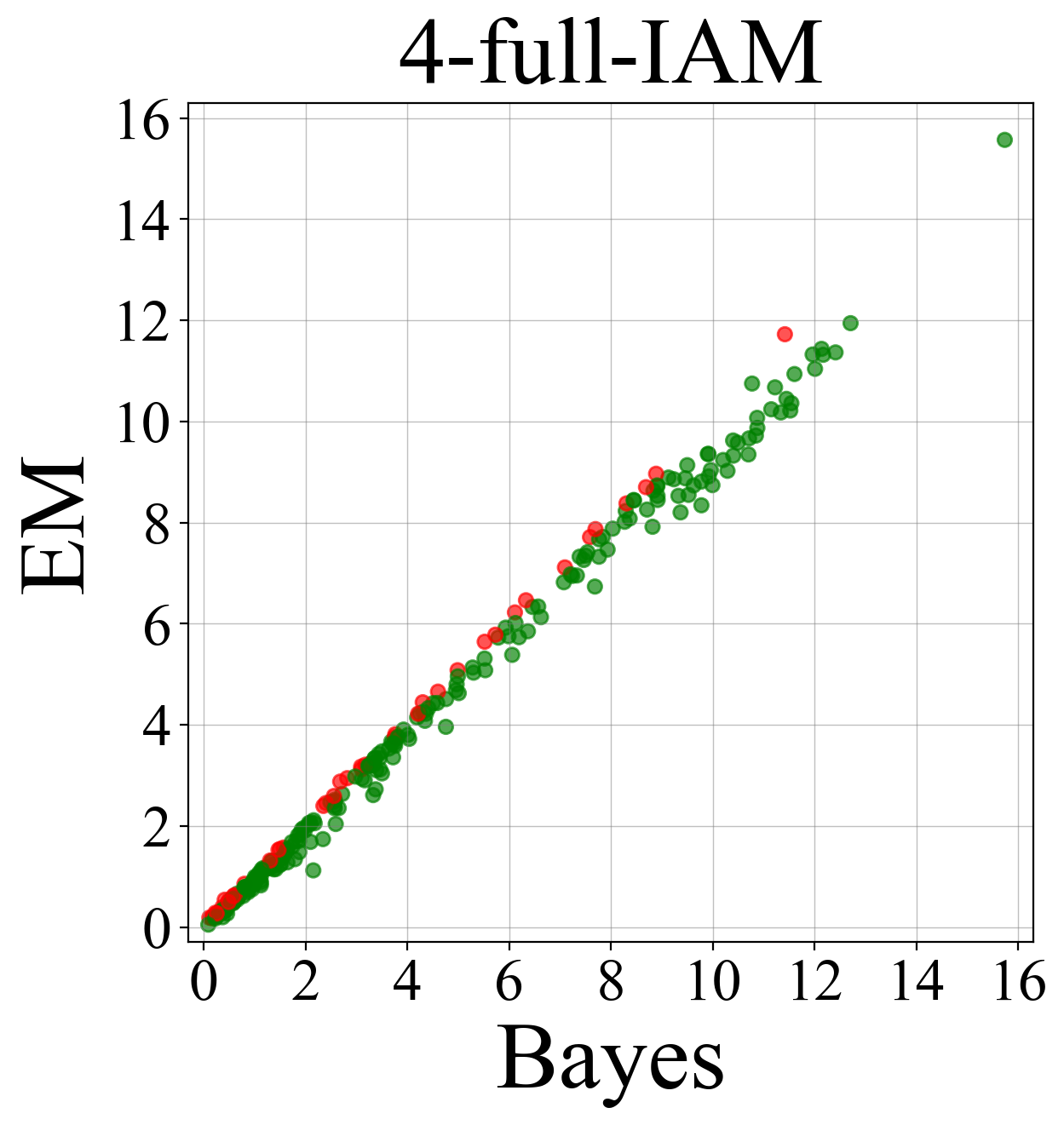}\\
     \includegraphics[width=4.9cm]{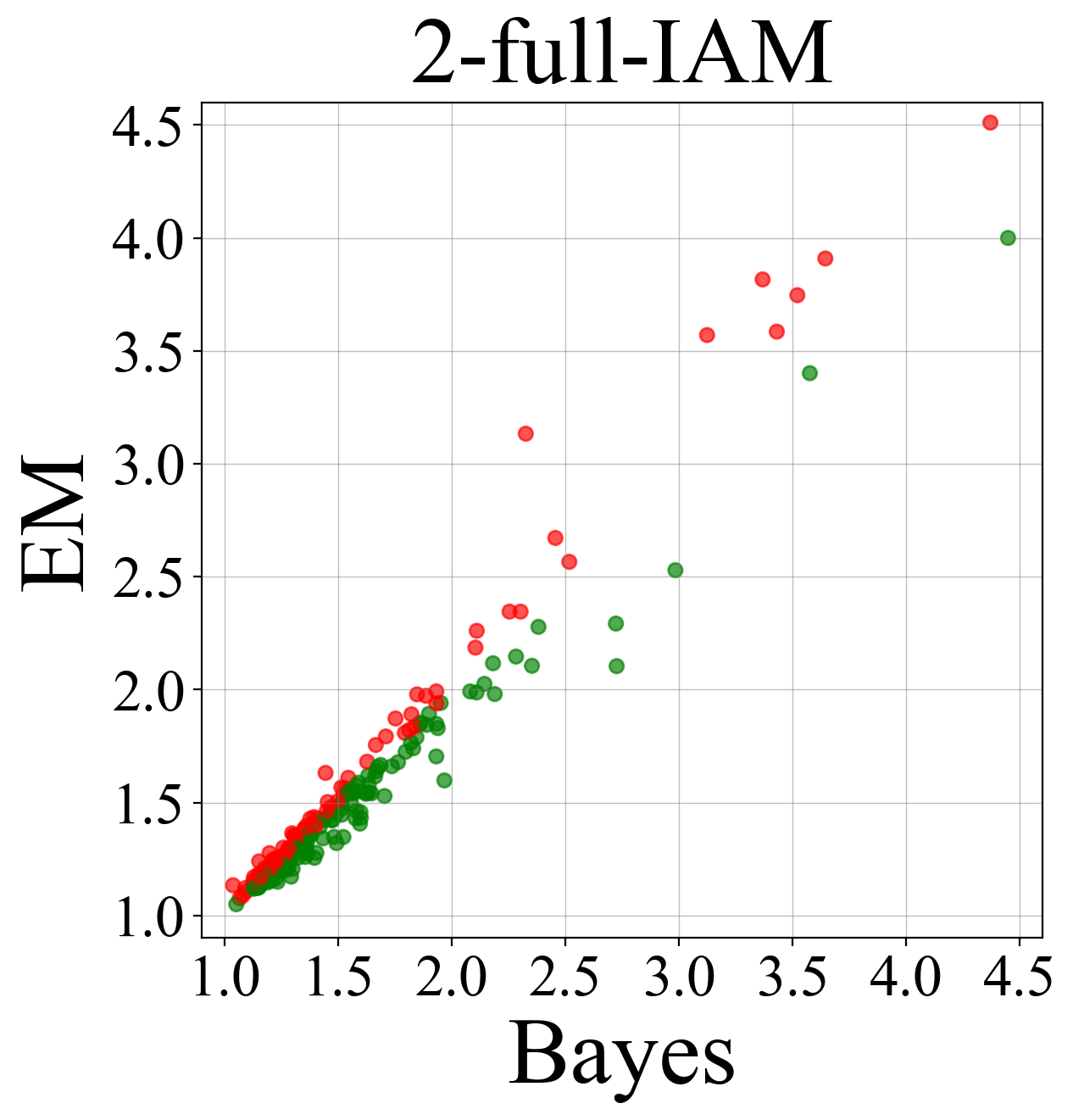}%
     \includegraphics[width=4.9cm]{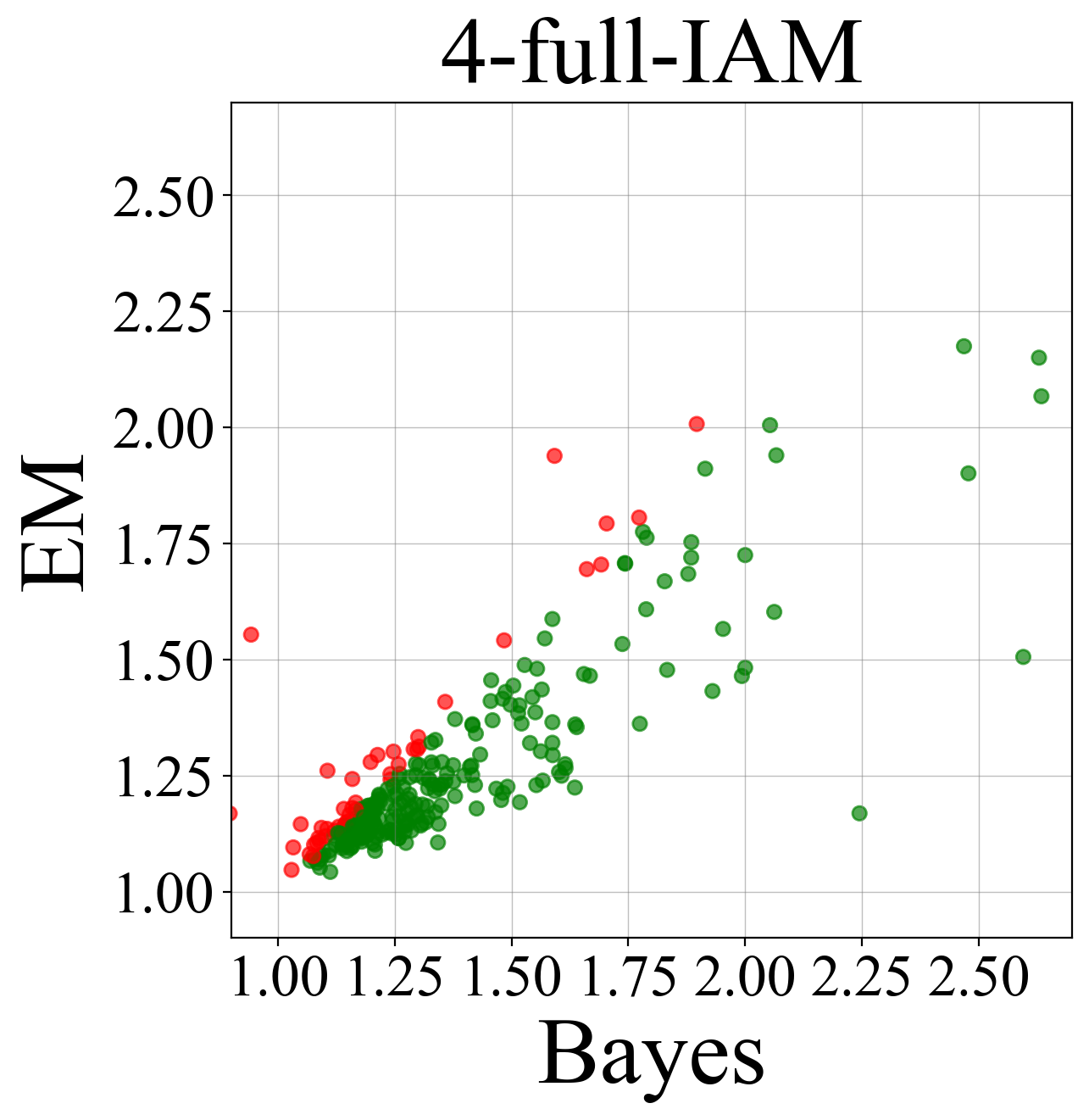}
     \caption{\label{fig:bayes_vs_em}Comparison of Bayes and EM
       learning methods. The upper plots show the absolute
       distances for both methods, and the lower ones show the
       relative distances.  By red (green) color we mark the cases
       where Bayes (EM) achieved smaller distance. Each
       dot depicts a single Pabulib instance.}
\end{figure}

In this section, we provide a high-level overview of the Pabulib
dataset and the performance of our learning
methods. In~\Cref{fig:avg_vote_len_vs_ham_dist} we illustrate the
relationship between the average vote length in a given election
(i.e., the average number of candidates approved by a single voter)
and the baseline distance of this election (left plot) or its absolute
distance from the best-learned model (right plot). Each dot represents
a single Pabulib instance, with the color corresponding to the profile
saturation (i.e., the average vote length divided by the total number
of candidates). The left plot reflects the self-similarity of the
Pabulib instances. We see that while some Pabulib elections have low
baseline distance (i.e., are close to the $x$ axis) for many of them
this is not the case. Indeed, the baseline distance seems to have a
close-to-linear dependence on the average vote length; the more
candidates the voters approve, the more diverse are the votes that
they cast.  The close resemblance between the two plots in
\Cref{fig:avg_vote_len_vs_ham_dist} indicates that, in most cases, our
learning algorithms perform well and achieve absolute distance similar
to the baseline one (we discuss this in more detail later, when
analyzing \Cref{fig:city_and_components}). Regarding the plot on the
right-hand side, we observe two key trends: First, as the average vote
length increases, the absolute distance also increases. Second, for a
given average vote length, higher saturation tends to correspond to a
smaller absolute distance.

In~\Cref{fig:ll_vs_ham_dist} we explore the relationship between the
(negative) log-likelihood and the absolute distance (left plot) and
the relative distance (right plot), both for the best-learned model
(i.e., the one that achieves the lowest absolute distance). While the
log-likelihood is strongly correlated with the absolute distance
(having Pearson correlation coefficient equal to 0.993), it is barely
(negatively) correlated with the relative distance (having
PCC=-0.567). Our conclusion here is that by using distance-based
metrics of quality we gain interpretability of our results (as
discussed in \Cref{sec:metrics}) without losing much of statistical
significance of log-likelihoods (as absolute distances are, in
essence, negative log-likelihoods in disguise).

\Cref{fig:city_and_components} compares the absolute and relative
distances between each election and its best-learned model (i.e., the
one that achieves lowest absolute distance). The left plot uses color
to show the city of origin for each instance, revealing that different
cities tend to occupy distinct regions of the plot. This suggests
significant variation in the nature of elections across cities and is
a strong argument to use data with different origins in experiments
based on Pabulib.

We note that absolute and relative distances tell us quite different
stories about the quality of a learned model. For instance, we may view
relative distance below $1.2$ as quite good, but it may still
correspond to the absolute distance of, say, $10$.  For the case
where, on average, each voter in the considered election also approves
$10$ candidates, this means that our algorithm learned a very good
model as compared to the baseline distance, but the input election is
internally so diverse that the generated votes will still largely
differ from those present in the actual data.
Similarly, we may view relative distance equal to $2$, as rather
unsatisfactory, but it might still mean an absolute distance of $0.5$
for the baseline of $0.25$. Even though the relative distance is
large, the absolute one is objectively small and the learned
distribution produces votes that can be seen as very similar to those
present in the considered election.
Finally, there also are some elections for which both
distances are quite high.
Then, we have to
concede that either IAM mixtures, or our algorithms, are simply
insufficient to learn these elections well. 

Let us now consider the right-hand plot of
\Cref{fig:city_and_components} (note different scales on the axes as
compared to the left-hand one).  This plot contrasts best-learned
single-component models (in this case these always are the full-IAM
ones) and best-learned mixture models (these often are $4$-full-IAMs
learned using the EM algorithm, but sometimes also $4$-resampling or
$4$-full-IAM ones learned using the Bayesian approach). As expected,
mixture models perform much better: Their points have lower absolute
and relative distance coordinates.  While the fact that mixture models
are more expressive than single-component ones is hardly surprising,
knowing the extent of their advantage is useful. Indeed, we see that
to generate realistic elections similar to those in
Pabulib, using mixture models of several IAMs
gives notably
better results that using single-component full-IAMs (not to mention
even simpler single-component models).

In~\Cref{fig:bayes_vs_em} we compare the EM and Bayesian
approaches. As the plots show, particularly in the case of the
4-full-IAM model, the EM approach consistently outperforms the
Bayesian method. Indeed, in this case the Bayesian approach often
finds it difficult to identify four components and outputs models that
perform even worse than the $3$-full-IAM models that it identifies.
While we believe that one could improve on this by engineering the
priors used in Bayesian models, we did not pursue this direction and
view it as a follow-up work.
The main conclusion from \Cref{fig:bayes_vs_em} is that both EM and
the Bayesian approach achieve similar results, even though the former
explicitly minimizes the negative log-likelihood and the latter does
not. This reinforces our view that both algorithms---based on so
different principles---identify meaningful components. As component
analysis can be challenging, we find this finding~valuable.

\subsection{Closer Look on a Few Instances}
\begin{figure}[t!]
     \centering
     \includegraphics[width=5.2cm]{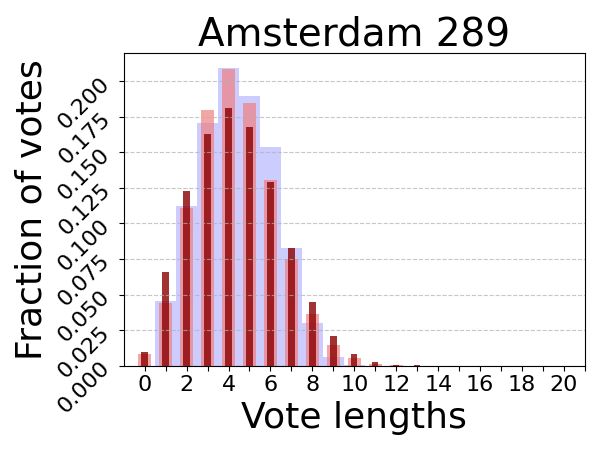}%
     \includegraphics[width=5.2cm]{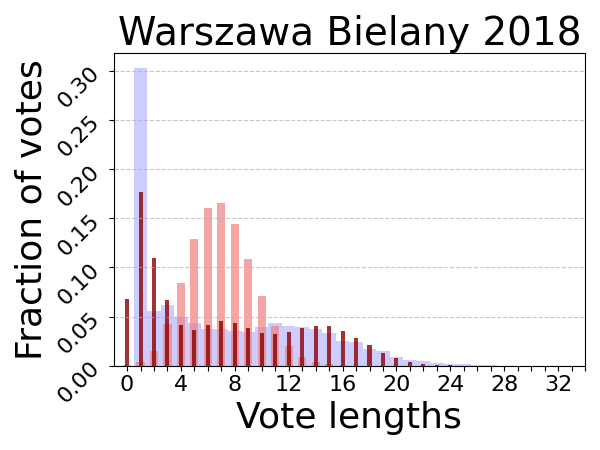}%
     \includegraphics[width=5.2cm]{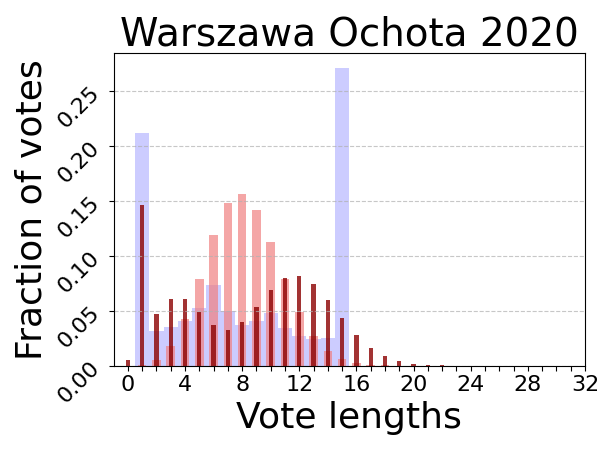}%
     \caption{\label{fig:superimposed} Superposition of three
       histograms of vote lengths for three Pabulib instances. Each
       picture shows histograms of the training election (blue) and
       learned single-component (pale red) and four-component (dark
       red) resampling models. For clarity, we
       removed bars for
       fractions
       smaller than~$10^{-10}$.}
\end{figure}

Next, we zoom in on a few selected instances to describe our
observations in more detail. To improve understanding, we use
histograms which show the number of votes of a given length in an
election.  In~\Cref{fig:superimposed}, for selected Pabulib elections
we lay over histograms of the respective~$E_\learn$ and of the learned
single-component and $4$-component resampling models (results for
Hamming and full-IAM
are similar, we chose resampling for variety).

As mentioned in~\Cref{sec:general-analysis}, nearly always
multiple-component
models yield elections with significantly smaller distances to the
original one. There are two main reasons to explain this observation:
First, single-component models are prone to ``the flaw of average'' of
the vote lengths. It is evident for elections with bimodal
vote length distributions, like that of election Warszawa Ochota
2020 in~\Cref{fig:superimposed}: A
single learned component mostly generates average-length votes,
which are dissimilar from those %
of either of the peaks.

Second, single-component models can only produce votes with a
relatively limited variance of their length. Hence, it is frequently
counterproductive to apply single-component models to learn elections
where this variance is high. It includes elections with a vote length
distribution that is asymmetric, uni-modal, and has a ``heavy tail''
(such as Warszawa Bielany 2018 depicted in~\Cref{fig:superimposed}) or
that have a bimodal distribution of vote lengths (like Warszawa Ochota
2020 in~\Cref{fig:superimposed}).  The histograms clearly show how
multiple components help in dealing with such distributions.

On the positive side, we want to stress that single-component models can
perform well on some real-life elections. In particular, this holds
true for elections with Gaussian-like distributions of vote lengths;
for an example, see election Amsterdam 289 in~\Cref{fig:superimposed}.

\section{Summary}
To the best of our knowledge, we performed the first comprehensive
analysis of Pabulib elections by learning them using both
single-component and mixture models. We found that some elections can
be captured with simple models such as resampling or full IAM, but
typically using mixtures of a few components is preferable.

\section*{Acknowledgments}
  This project has received funding from the European Research Council
  (ERC) under the European Union’s Horizon 2020 research and
  innovation programme (grant agreement No 101002854), from the French
  government under the management of Agence Nationale de la Recherche
  as part of the France 2030 program, reference ANR-23-IACL-0008.  In
  part, A. Kaczmarczyk acknowledges support from NSF CCF-2303372 and
  ONR N00014-23-1-2802.  In part, S. Szufa was supported by the
  Foundation for Polish Science (FNP).  In part, M. Kurdziel was
  supported by Poland’s National Science Centre (NCN) grant
  no. 2023/49/B/ST6/01458.  We gratefully acknowledge Polish
  high-performance computing infrastructure PLGrid (HPC Center: ACK
  Cyfronet AGH) for providing computer facilities and support within
  computational grant no. PLG/2024/017160.
  \begin{center}
    \includegraphics[width=3cm]{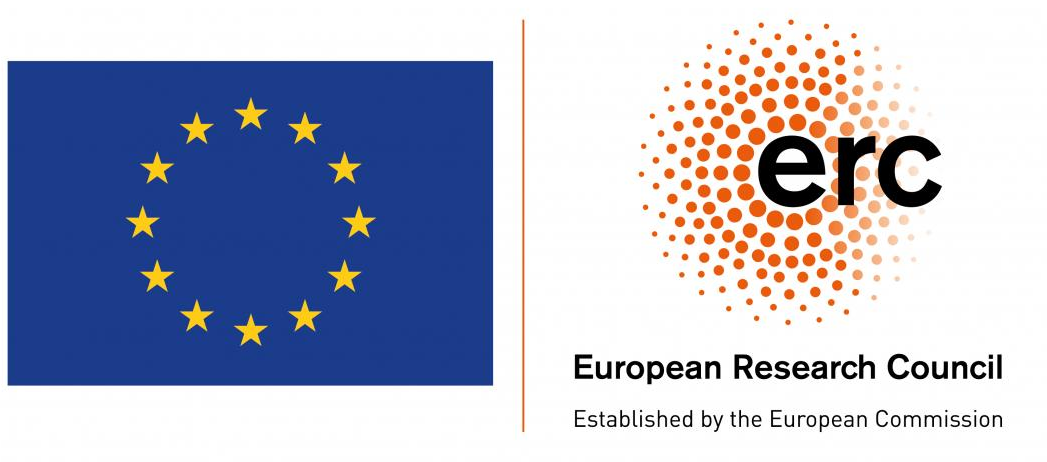}
  \end{center}

\balance{}

\bibliographystyle{plainnat}
\bibliography{bib, bayes}

@article{Hoffman2014,
  author       = {M.~D. Hoffman and A.~Gelman},
  title        = {The No-U-turn sampler: adaptively setting path lengths in Hamiltonian
                  Monte Carlo},
  journal      = {J. Mach. Learn. Res.},
  volume       = {15},
  number       = {1},
  pages        = {1593--1623},
  year         = {2014},
}

@article{Jasara2005,
  author       = {A. Jasra and C. C. Holmes and D. A. Stephens},
  title        = {{Markov Chain Monte Carlo Methods and the Label Switching Problem in Bayesian Mixture Modeling}},
  volume       = {20},
  journal      = {Statistical Science},
  number       = {1},
  publisher    = {Institute of Mathematical Statistics},
  pages        = {50--67},
  year         = {2005},
}

@inproceedings{Obermeyer2019,
  author       = {F.~Obermeyer and E.~Bingham and M.~Jankowiak and N.~Pradhan and
                  J.~T.~Chiu and A.~M.~Rush and N.~D.~Goodman},
  title        = {Tensor Variable Elimination for Plated Factor Graphs},
  booktitle    = {Proceedings of {ICML}-2019},
  pages        = {4871--4880},
  year         = {2019},
}

@article{Phan2019,
  title        = {Composable Effects for Flexible and Accelerated Probabilistic Programming in NumPyro},
  author       = {Phan, D. and Pradhan, N. and Jankowiak, M.},
  journal      = {arXiv preprint arXiv:1912.11554},
  year         = {2019}
}

@article{dempter1977maximum,
  title={Maximum likelihood from incomplete data via the EM algorithm},
  author={Dempter, AP},
  journal={Journal of Royal Statistical Society},
  volume={39},
  pages={1--22},
  year={1977}
}

@string{aaai19 = {Proceedings of AAAI-2019}}

@string{aaai22 = {Proceedings of AAAI-2022}}

@string{aaai23 = {Proceedings of AAAI-2023}}

@string{focs18 = {Proceedings of FOCS-2018}}

@string{icml16 = {Proceedings of ICML-2016}}

@string{ijcai17 = {Proceedings of IJCAI-2017}}

@string{ijcai22 = {Proceedings of IJCAI-2022}}

@string{ijcai23 = {Proceedings of IJCAI-2023}}

@string{ijcai24 = {Proceedings of IJCAI-2024}}

@string{jaamasabbr = {Auton. Agent. Multi-Ag.}}

@string{neurips19 = {Proceedings of NeurIPS-2019}}

@string{neurips22 = {Proceedings of NeurIPS-2022}}

@string{science = {Science}}

@string{socialchoice = {Social Choice and Welfare}}

@article{mal:j:mallows,
	Author = {C.~Mallows},
	Journal = {Biometrica},
	Pages = {114--130},
	Title = {Non-null ranking models},
	Volume = {44},
	Year = {1957}}

@article{lu-bou:j:sampling-mallows,
  author    = {T.~Lu and
               C.~Boutilier},
  title     = {Effective Sampling and Learning for Mallows Models with Pairwise-Preference
               Data},
  journal   = {Journal of Machine Learning Research},
  volume    = {15},
  number    = {1},
  pages     = {3783--3829},
  year      = {2014},
}

@inproceedings{fal-sko-sli-szu-tal:c:isomorphism,
  author    = {P.~Faliszewski and
               P.~Skowron and
               A.~Slinko and
               S.~Szufa and
               N.~Talmon},
  title     = {How Similar Are Two Elections?},
  booktitle = aaai19,
  pages     = {1909--1916},
  year      = {2019},
}

@article{bet-bre-nie:j:kemeny,
  author    = {N.~Betzler and
               R.~Bredereck and
               R.~Niedermeier},
  title     = {Theoretical and empirical evaluation of data for exact Kemeny
               Rank Aggregation},
  journal   = jaamasabbr,
  volume    = {28},
  number    = {5},
  pages     = {721--748},
  year      = {2014},
}

@inproceedings{szu-fal-jan-lac-sli-sor-tal:c:sampling-approval-elections,
  author    = {S.~Szufa and
               P.~Faliszewski and
               \L{}.~Janeczko and
               M.~Lackner and
               A.~Slinko and
               K.~Sornat and
               N.~Talmon},
  title     = {How to Sample Approval Elections?},
  booktitle = ijcai22,
  pages     = {496--502},
  year      = {2022},
}

@inproceedings{fal-fli-pet-pie-sko-sto-szu-tal:c:pabulib,
  title     = {Participatory Budgeting: Data, Tools and Analysis},
  author    = {Faliszewski, P. and Flis, J. and Peters, D. and Pierczyński, G.
               and Skowron, P. and Stolicki, D. and Szufa, S. and Talmon, N},
  booktitle = ijcai23,
  pages     = {2667--2674},
  year      = {2023},
  month     = {8},
}

@article{car-kak-kar-kri:j:noise-approval,
  author       = {I.~Caragiannis and
                  C.~Kaklamanis and
                  N.~Karanikolas and
                  G.~Krimpas},
  title        = {Evaluating Approval-Based Multiwinner Voting in Terms of Robustness
                  to Noise},
  journal      = jaamasabbr,
  volume       = {36},
  number       = {1},
  pages        = {Article~1},
  year         = {2022},
}

@inproceedings{boe-fal-jan-kac-lis-pie-rey-sto-szu-was:c:guide,
  author       = {N.~Boehmer and
                  P.~Faliszewski and
		  L.~Janeczko and
		  A.~Kaczmarczyk and
		  G.~Lisowski and
		  G.~Pierczynski and
		  S.~Rey and
		  D.~Stolicki and
		  S.~Szufa and
		  T.~Was},
  title        = {Guide to Numerical Experiments on Elections in Computational Social Choice},
  booktitle    = ijcai24,
  pages        = {7962--7970},
  year         = {2024}
}

@inproceedings{boe-bre-elk-fal-szu:c:map-pref-learning,
  author       = {N.~Boehmer and
                  R.~Bredereck and
                  E.~Elkind and
                  P.~Faliszewski and
                  S.~Szufa},
  title        = {Expected Frequency Matrices of Elections: Computation, Geometry, and
                  Preference Learning},
  booktitle    = neurips22,
  year         = {2022},
}

@inproceedings{zha-xia:c:plackett-luce-nips19,
  author       = {Z.~Zhao and
                  L.~Xia},
  title        = {Learning Mixtures of Plackett-Luce Models from Structured Partial
                  Orders},
  booktitle    = neurips19,
  pages        = {10143--10153},
  year         = {2019},
}

@inproceedings{ngu-zha:c:plackett-luce-aaai23,
  author       = {D.~Nguyen and
                  A.~Zhang},
  title        = {Efficient and Accurate Learning of Mixtures of Plackett-Luce Models},
  booktitle    = aaai23,
  pages        = {9294--9301},
  year         = {2023},
}

@inproceedings{liu-zha-lia-lu-xia:plackett-luce-aaai19,
  author       = {A.~Liu and
                  Z.~Zhao and
                  C.~Liao and
                  P.~Lu and
                  L.~Xia},
  title        = {Learning Plackett-Luce Mixtures from Partial Preferences},
  booktitle    = aaai19,
  pages        = {4328--4335},
  year         = {2019},
}

@inproceedings{zha-pie-xia:c:plackett-luce-icml16,
  author       = {Z.~Zhao and
                  P.~Piech and
                  L.~Xia},
  title        = {Learning Mixtures of Plackett-Luce Models},
  booktitle    = icml16,
  pages        = {2906--2914},
  year         = {2016},
}

@inproceedings{liu-moi:c:mallows-mixtures,
  author       = {A.~Liu and
                  A.~Moitra},
  title        = {Efficiently Learning Mixtures of Mallows Models},
  booktitle    = focs18,
  pages        = {627--638},
  year         = {2018},
}

@inproceedings{car-feh:c:learning-approval-rules,
  author       = {I.~Caragiannis and
                  K.~Fehrs},
  title        = {The Complexity of Learning Approval-Based Multiwinner Voting Rules},
  booktitle    = aaai22,
  pages        = {4925--4932},
  year         = {2022},
}

@inproceedings{car-mic:c:learning-mallows,
  author       = {I.~Caragiannis and
                  E.~Micha},
  title        = {Learning a Ground Truth Ranking Using Noisy Approval Votes},
  booktitle    = ijcai17,
  pages        = {149--155},
  year         = {2017},
}

@article{rol-aub-gan-leo:j:learning-evaluation-voting,
	Author = {A.~Rolland and J.-B.~Aubin and I.~Gannaz and S.~Leoni},
	Title = {Probabilistic Models of Profiles for Voting by Evaluation},
	Journal = socialchoice,
	Volume = {63},
	Number = {2},
	Pages = {377--400},
	Year = {2024}}

@article{lac-mal:j:approval-shortlisting,
  author       = {M.~Lackner and
                  J.~Maly},
  title        = {Approval-based shortlisting},
  journal      = socialchoice,
  volume       = {64},
  number       = {1-2},
  pages        = {97--142},
  year         = {2025},
}

@techreport{xia:t:linear-multiwinner,
  author       = {Lirong Xia},
  title        = {A Linear Theory of Multi-Winner Voting},
  institution = {arXiv.org},
  month = mar,
  number = {arXiv.2503.03082~[cs.GT]},
  year = 2025
 }

\appendix

\section{Missing Corollary for Expectation Maximization}\label{apdx:em}

Below we provide a corollary based on
\Cref{prop:maximization-simplification} that obtains analogues of
theorems presented in \Cref{sec:learn-single-iam} for IC, full IAM,
Hamming and resampling models. They differ from their original
versions so that for each $k$-th mixture component the number of
voters $n$ is replaced by $Q\cdot \gamma_k$ and each vote $v$ is
multiplied $Q\cdot \gamma_{v,k}$ times---the value $Q$ in all cases
can be actually reduced from the formulas.

\begin{corollary}
  Let $E = (C,V)$ be an approval election with $m$ candidates and $n$
  voters, such that
  $|V(c_1)| \geq |V(c_2)| \geq \cdots \geq |V(c_m)|$, and let $\gamma$ be the soft assignment of votes to some $K$
  IAM mixture components. Then for each $k \in [K]$, the expected log-likelihood
  of generating
  the part of assignment corresponding to the $k$-th component is
  maximized:
  \begin{enumerate}
    \item for $p$-IC, when $p = \nicefrac{1}{\gamma_k m}\cdot \sum_{v\in V}
      \gamma_{v, k} \cdot |A(v)|$,
    \item for full $(p_1, \ldots, p_m)$-IAM, when $p_j = \nicefrac{\sum_{v\in
      V(c_j)} \gamma_{v,k}}{\gamma_k}$ for each $j\in [m]$,
    \item for $(\phi, u)$-Hamming, when $u$ is the
      majoritarian central vote and $\phi = \nicefrac{h}{m\gamma_k -h}$, where $h = \sum_{i=1}^n{\gamma_{v,i}\ham(u,v_i)}$
    \item for $(p_1, p_2)$-IAM, when:
    \begin{align*}
  p_1 &= \frac{\sum_{v\in V(c_1)} \gamma_{v, k}+ \cdots + \sum_{v\in V(c_{m'})}
  \gamma_{v, k}}{\gamma_k m'}, \quad C_1 = \{c_1, \ldots, c_{m'}\}, \\ 
  p_2 &= \frac{\sum_{v\in V(c_{m'+1})} \gamma_{v,k} + \cdots + \sum_{v\in
  V(c_{m})} \gamma_{v,k}}{\gamma_k (m-m')}, \quad  C_2 = C \setminus C_1.
\end{align*}
 for some $m'\in [m]$.
  \end{enumerate}
\end{corollary}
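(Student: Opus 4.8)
The plan is to reduce each of the four claims to the corresponding single-component learning result from \Cref{sec:learn-single-iam}, applied not to $E$ itself but to the induced election $E_{\gamma,k}$ introduced just before \Cref{prop:maximization-simplification}. That proposition already establishes that, for a fixed component $k$, maximizing the expected log-likelihood of the part of the soft assignment belonging to the $k$-th component is the same task as maximizing $\prob(E_{\gamma,k}\cond\cdot)$ over the parameters of the relevant model family. So after invoking it, all that remains is to translate the formulas from \Cref{prop:learn:p-ic}, \Cref{cor:learn:iam}, the proposition on learning $\phi$-Hamming models, and \Cref{thm:2-iam} back into the language of $E$ and $\gamma$.

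The bookkeeping is uniform across the four cases. Recall that $E_{\gamma,k}$ contains $Q\cdot\gamma_{v,k}$ copies of each voter $v$, hence it has $\sum_{v\in V}Q\gamma_{v,k}=Q\gamma_k$ voters, and the approval score of a candidate $c_j$ in $E_{\gamma,k}$ equals $\sum_{v\in V(c_j)}Q\gamma_{v,k}$. For item~(1), \Cref{prop:learn:p-ic} says the optimal $p$ is the overall fraction of approvals in $E_{\gamma,k}$, namely $\frac{\sum_{v\in V}Q\gamma_{v,k}|A(v)|}{m\cdot Q\gamma_k}$; the factor $Q$ cancels, giving $p=\frac{1}{\gamma_k m}\sum_{v\in V}\gamma_{v,k}|A(v)|$. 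For item~(2), \Cref{cor:learn:iam} sets each $p_j$ to the approval score of $c_j$ in $E_{\gamma,k}$ divided by its number of voters, i.e.\ $\frac{\sum_{v\in V(c_j)}Q\gamma_{v,k}}{Q\gamma_k}=\frac{\sum_{v\in V(c_j)}\gamma_{v,k}}{\gamma_k}$. For item~(3), the $\phi$-Hamming result gives the majoritarian central vote $u$ of $E_{\gamma,k}$, which is the weighted-majoritarian vote ($c_i\in u$ iff $\sum_{v\in V(c_i)}\gamma_{v,k}\ge\gamma_k/2$), together with $\phi=\frac{h'}{m n'-h'}$ where $n'=Q\gamma_k$ and $h'=\sum_i Q\gamma_{v_i,k}\ham(u,v_i)=Qh$; again $Q$ cancels, yielding $\phi=\frac{h}{m\gamma_k-h}$. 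For item~(4), \Cref{thm:2-iam} applied to $E_{\gamma,k}$ puts a prefix of the candidates (ordered by approval score in $E_{\gamma,k}$) into $C_1$ and the rest into $C_2$, with each $p_i$ equal to the fraction of approvals among the candidates in that group; substituting the weighted scores and cancelling $Q$ produces the stated formulas.

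The point that needs care — and the main obstacle — is the candidate ordering in item~(4): \Cref{thm:2-iam} requires the candidates sorted by their approval score \emph{in the election to which it is applied}, here $E_{\gamma,k}$, and the weighted scores $\sum_{v\in V(c_j)}\gamma_{v,k}$ need not respect the unweighted order $|V(c_1)|\ge\cdots\ge|V(c_m)|$ assumed in the statement. So one should first re-index the candidates by their weighted scores before reading off $C_1=\{c_1,\ldots,c_{m'}\}$; with that convention the claim follows verbatim. The remaining loose ends are the familiar degenerate cases already handled in the original proofs (everyone approves everything or nobody approves anything for item~(1); $h=0$ for item~(3)), which transfer unchanged to $E_{\gamma,k}$, plus the observation noted in \Cref{apdx:em} that, since $Q$ cancels everywhere, it never actually enters the final formulas.
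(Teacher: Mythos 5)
Your proposal is correct and takes essentially the same route the paper intends: the paper offers no explicit proof beyond the remark that the corollary follows from \Cref{prop:maximization-simplification} by replacing $n$ with $Q\cdot\gamma_k$ and multiplying each vote $Q\cdot\gamma_{v,k}$ times, with $Q$ cancelling---which is exactly the reduction and bookkeeping you carry out. Your additional point about re-indexing the candidates by their \emph{weighted} approval scores $\sum_{v\in V(c_j)}\gamma_{v,k}$ before applying \Cref{thm:2-iam} in item~(4) is a correct and worthwhile refinement of an imprecision in the stated corollary.
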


\section{Bayesian models}

\subsection{Mixtures of Resampling and Hamming distributions}\label{apdx:bayes_models}

Using techniques introduced in~\Cref{sec:bayes}, we formulate a Bayesian mixture model with $(p,\phi)$-Resampling components:
\begin{enumerate}
\item Sample component probabilities,
  $\left( \alpha_1, \ldots, \alpha_K \right) \sim
  \mathit{Dirich}(\mathbf{1}^K)$.
\item For all $k \in [K]$:
  \begin{enumerate}
  \item Sample the resampling and approval probabilities, $\phi_k \sim U(0, 1)$, $p_k \sim U(0, 1)$.
  \item For all $c \in C$, sample the central vote element, $u_k(c) \sim \mathit{Bernoulli}(p_k)$.     
  \end{enumerate}
  \item For all $v_i \in V$:
    \begin{enumerate}
    \item Sample the component index, $z \sim \mathit{Cat}(\alpha_1, \ldots, \alpha_K)$.
    \item For all $c \in C$, sample the vote outcome:
      \begin{enumerate}
        \item $\sigma \sim U(0, 1)$, 
        \item $v_i(c) = u_z(c)$, if $\sigma > \phi_z$, and
         $v_i(c) \sim \mathit{Bernoulli}(p_z)$, if $\sigma \leq \phi_z$.
      \end{enumerate}
    \end{enumerate}
\end{enumerate}
Similarly, we formulate a Bayesian mixture model with $\phi$-Hamming components:
\begin{enumerate}
\item Sample component probabilities,
  $\left( \alpha_1, \ldots, \alpha_K \right) \sim
  \mathit{Dirich}(\mathbf{1}^K)$.
\item For all $k \in [K]$:
  \begin{enumerate}
  \item Sample the noise parameter, $\phi_k \sim U(0, 1)$. 
  \item For all $c \in C$, sample the central vote element,  $u_k(c) \sim \mathit{Bernoulli}(\nicefrac{1}{2})$. 
  \end{enumerate}
  \item For all $v_i \in V$:
    \begin{enumerate}
    \item Sample the component index, $z \sim \mathit{Cat}(\alpha_1, \ldots, \alpha_K)$.
    \item For all $c \in C$, sample the vote outcome:
      \begin{enumerate}
        \item[] $v_i(c) \sim \mathit{Bernoulli}(\nicefrac{1}{1 + \phi_z})$ if $u_z(c) = 1$, and
        \item[]  $v_i(c) \sim \mathit{Bernoulli}(\nicefrac{\phi}{1 + \phi_z})$  if  $u_z(c) \neq 1$.
      \end{enumerate}
    \end{enumerate}
\end{enumerate}
In this model we sample from components using equivalence of the
$\phi$-Hamming distribution with the 2-dimensional IAM.

\subsection{Estimation}

For each evaluated model, we use the No-U-turn sampler~\citep{Hoffman2014} implementation provided by the NumPyro~\citep{Phan2019} library to generate~$2000$ samples from the posterior distribution. Following standard practice, we discard initial part of the sampler's trajectory, where the Markov chain might be transitioning to an equilibrium distribution. Specifically, we discard initial~$1000$ samples, and use the remaining~$1000$ to estimate mean parameter values.

\end{document}